\documentclass[letterpaper,11pt]{article}

\usepackage[arXiv]{optional}

\usepackage{verbatim}
\usepackage[margin=1in]{geometry}
\usepackage{setspace}
\usepackage{graphicx}
\usepackage{epsfig}

\usepackage{microtype} 
\usepackage{array} 

\usepackage{amssymb}
\usepackage{amsmath}
\usepackage{amsthm}
\usepackage{mathtools} 
\usepackage{stmaryrd} 
\usepackage{stackrel} 
\usepackage{latexsym} 

\usepackage{tikz}
\usetikzlibrary{arrows}
\usetikzlibrary{arrows.meta}
\usetikzlibrary{decorations.pathreplacing}
\usetikzlibrary{cd}
\usepackage[all]{xy} 

\usepackage[ruled]{algorithm2e} 
\usepackage[noend]{algpseudocode}
\algnewcommand{\LineComment}[1]{\Statex\hspace{\algorithmicindent}\(\triangleright\) #1}
\algnewcommand\algorithmicforeach{\textbf{for each}}
\algdef{S}[FOR]{ForEach}[1]{\algorithmicforeach\ #1\ \algorithmicdo}
\usepackage{etoolbox} 

\usepackage{caption}
\usepackage{subcaption}

\usepackage{makecell} 
\usepackage{booktabs} 
\usepackage{multirow} 
\usepackage{tablefootnote} 

\usepackage[hidelinks]{hyperref} 
\usepackage[symbol]{footmisc}
\usepackage{makeidx} 
\usepackage{url} 
\usepackage{color}
\usepackage{xcolor}
\usepackage{xspace} 
\usepackage[normalem]{ulem} 
\usepackage{soul} 
\usepackage{ifpdf} 
\usepackage{extarrows}
\usepackage{stackengine}    
\usepackage{lipsum}
\usepackage{bm}
\usepackage{cleveref} 
\usepackage{cite} 
\usepackage{booktabs}
\makeatletter
\expandafter\patchcmd\csname\string\algorithmic\endcsname{\itemsep\z@}{\itemsep=0.25ex}{}{}
\makeatother

\theoremstyle{plain}
    \newtheorem{theorem}{Theorem}
    \newtheorem{proposition}[theorem]{Proposition}

    \newtheorem{corollary}[theorem]{Corollary}
    \newtheorem{fact}[theorem]{Fact}

    \newtheorem*{algr*}{Algorithm}
\theoremstyle{definition}
    \newtheorem{definition}[theorem]{Definition}

    \newtheorem{remark}[theorem]{Remark}
    \newtheorem*{remark*}{Remark}
    \newtheorem*{example*}{Example}
    
\usepackage{enumitem}
\setlist[itemize]{leftmargin=\parindent}
\setlist[enumerate]{leftmargin=\parindent}
\setlist[description]{font=\bfseries,leftmargin=\parindent}

\title{Linked Barcode for Persistence Induced by Filtrations\thanks{This research is partially supported by NSF grants CCF 2049010, CCF 2301360, and CCF 2439255.}}

\author{Tamal K. Dey\thanks{Department of Computer Science, Purdue University. \texttt{tamaldey@purdue.edu}}
\and Gilberto Gonzalez-Arroyo\thanks{Department of Computer Science, Purdue University. \texttt{gonza982@purdue.edu}}
\and Tao Hou\thanks{Department of Computer Science, University of Oregon. \texttt{taohou@uoregon.edu}}
}
\date{}

\crefname{equation}{Eq.}{Eqs.}

\newcommand{\tao}[1]{\textcolor{magenta}{TH: #1}}
\newcommand{\gilberto}[1]{\textcolor{red}{GG: #1}}

\newcommand{\para}[1]{\opt{arXiv}{\paragraph{#1}}\opt{SoCG}{\subparagraph*{#1}}}

\newcommand{\Hm}{\mathsf{H}}

\newcommand{\C}{\mathsf{C}}
\newcommand{\Z}{\mathsf{Z}}
\newcommand{\M}{\mathsf{M}}
\newcommand{\N}{\mathsf{N}}
\newcommand{\U}{\mathsf{U}}
\newcommand{\W}{\mathsf{W}}
\newcommand{\Y}{\mathsf{Y}}
\newcommand{\Ps}{\mathsf{P}}
\newcommand{\B}{\mathsf{B}}

\newcommand{\X}{\mathsf{X}}
\newcommand{\Lsf}{\mathsf{L}}

\newcommand{\Real}{\mathbb{R}}

\newcommand{\tran}{\mathsf{T}}
\renewcommand{\ker}{\mathsf{ker}}
\newcommand{\coker}{\mathsf{coker}}
\newcommand{\img}{\mathsf{img}}

\newcommand{\Pers}{\mathcal{PB}}
\newcommand{\lbarc}{\mathcal{LB}}
\newcommand{\lkbarc}{\mathcal{LKB}}

\renewcommand{\bar}[1]{\overline{#1}}

\newcommand{\lbarrowspace}{\;}

\let\rightarrowsp\rarrowsp

\newcommand{\incto}{\hookrightarrow}
\newcommand{\inctosp}[1]{\xhookrightarrow{\lbarrowspace#1\lbarrowspace}}

\newcommand{\sset}{\Sigma}
\newcommand{\pset}{\Pi}
\newcommand{\nset}{\Xi}
\newcommand{\gr}{\mathrm{gr}}
\newcommand{\subc}{\mathbf{subc}}
\newcommand{\lfunc}{\Phi}
\newcommand{\vecc}{\mathbf{vec}_\field}

\let\emptyset\varnothing

\newcommand{\Dcal}{\mathcal{D}}
\newcommand{\Fcal}{\mathcal{F}}

\newcommand{\Rcal}{\mathcal{R}}

\newcommand{\eG}{\epsilon}

\newcommand{\iG}{\iota}

\newcommand{\lG}{\lambda}

\newcommand{\sG}{\sigma}

\newcommand{\blue}{\color{blue}}
\newcommand{\gray}{\color{gray}}

\newcommand{\lbirth}{\mathsf{a}}
\newcommand{\ldeath}{\mathsf{b}}

\newcommand{\filt}{\Fcal}
\newcommand{\cplx}{K}

\newcommand{\filtr}{\Rcal}
\newcommand{\cplxr}{K^\mathcal{R}}
\newcommand{\ssxr}{\tau}
\newcommand{\subcplx}{L}

\newcommand{\field}{\Bbbk}
\newcommand{\low}{\mathrm {low}}
\newcommand{\col}{\mathrm {col}}
\newcommand{\row}{\mathrm {row}}

\makeatletter
\newcommand*{\da@rightarrow}{\mathchar"0\hexnumber@\symAMSa 4B }
\newcommand*{\da@leftarrow}{\mathchar"0\hexnumber@\symAMSa 4C }
\newcommand*{\xdashrightarrow}[2][]{%
  \mathrel{%
    \mathpalette{\da@xarrow{#1}{#2}{}\da@rightarrow{\;}{}}{}%
  }%
}
\newcommand{\xdashleftarrow}[2][]{%
  \mathrel{%
    \mathpalette{\da@xarrow{#1}{#2}\da@leftarrow{}{}{\;}}{}%
  }%
}
\newcommand{\xdashleftrightarrow}[2][]{%
  \mathrel{%
    \mathpalette{\da@xarrow{#1}{#2}\da@leftarrow\da@rightarrow{}{}}{}%
  }%
}
\newcommand*{\da@xarrow}[7]{%
  \sbox0{$\ifx#7\scriptstyle\scriptscriptstyle\else\scriptstyle\fi#5#1#6\m@th$}%
  \sbox2{$\ifx#7\scriptstyle\scriptscriptstyle\else\scriptstyle\fi#5#2#6\m@th$}%
  \sbox4{$#7\dabar@\m@th$}%
  \dimen@=\wd0 %
  \ifdim\wd2 >\dimen@
    \dimen@=\wd2 %
  \fi
  \count@=2 %
  \def\da@bars{\dabar@\dabar@}%
  \@whiledim\count@\wd4<\dimen@\do{%
    \advance\count@\@ne
    \expandafter\def\expandafter\da@bars\expandafter{%
      \da@bars
      \dabar@ 
    }%
  }%
  \mathrel{#3}%
  \mathrel{%
    \mathop{\da@bars}\limits
    \ifx\\#1\\%
    \else
      _{\copy0}%
    \fi
    \ifx\\#2\\%
    \else
      ^{\copy2}%
    \fi
  }%
  \mathrel{#4}%
  \!\!
}
\makeatother

\newcounter{desccounter}

\newcommand{\cancel}[1]

\begin{document}

\date{}

\maketitle

\begin{abstract}
The well-known persistence algorithm
summarizes the evolution of homological cycles into what is called a \emph{barcode}
while scanning an input simplicial
filtration. We show that this summarization
process can be enriched by monitoring other algebraic structures that
weave through different dimensions. In particular, we propose an algorithm
to monitor the $(p+1)$-chains that make $p$-cycles to be $p$-boundaries and then
morph into $(p+1)$-cycles. In effect, we get extra bars called \emph{links} connecting the
bars in dimension $p$ with the bars in dimension $p+1$ in the persistence barcode. 
The links produce extra
barcodes, which we call \emph{link barcodes} 
in addition to the usual ones obtained by standard persistence. 
The link barcodes, as such, are not stable. However, we can make them stable
using a fixed ``reference'' filtration. We apply
the link barcodes
to the graph isomorphism 
problem and to the link prediction problem in temporal networks exhibiting its discriminating power through these experiments.
\end{abstract}



\section{Introduction}
The persistence algorithm as originally proposed~\cite{ELZ02} works on
an input simplicial filtration that is a growing sequence of simplicial
complexes. This ``geometric'' input inducing an ``algebraic'' input called
\emph{persistence module}~\cite{ZC05} is prevalent in practice ~\cite{chazal2021introduction}. There is
a simple intuitive interpretation of persistence in terms of the evolution
of homological cycles (their classes) through the growing spaces in the
input filtration. Persistence tracks \emph{independent} classes of
such cycles recording their survival times delimited by their creation
times called \emph{births} and destruction times called \emph{deaths}.
This record is output as a set of intervals called bars that comprise the \emph{barcode} summarizing the input filtration. 

In this paper, we enrich the above summarization of the input with a barcode that captures evolution of other topological features in addition to the homological cycles. Interestingly, these features interconnect with cycles to give a barcode
that weaves through different dimensions by connecting the bars in the
classical barcode. A homological $p$-cycle $z$ dies when a $(p+1)$-chain $y$ gets
completed where $z$ bounds $y$. The classical persistence algorithm detects this
and does not consider $y$ any further. We continue to monitor $y$, which gets born at
the death of $z$ and survives until it evolves to a $(p+1)$-cycle, say $z'$, with additions of more $(p+1)$-simplices. This provides an interval (\emph{link}) for the
$(p+1)$-chain $y$ from the death of the $p$-cycle $z$ to the birth of the
$(p+1)$-cycle $z'$. In effect, this link connects a bar in $p$-dimension to a bar
in $(p+1)$-dimension in the classical barcode. Consequently, we now get a bar 
that is obtained by connecting the bars for homological cycles in different
dimensions in a canonical way giving us a new barcode we call \emph{linked barcode}; see
Figure~\ref{fig:linkedbar}.

First, we observe that the links themselves arise from the persistence module induced by a simplex-wise
filtration in a unique way. 
This confirms that links provide a canonical
connection among the homological bars. Next, we design an algorithm for computing
these links, the  justification of which is derived from an algebraic
construct called \emph{presentations} of persistence modules~\cite{lesnicknotes,NVO00}.
The algorithm works on the classic matrix reductions with left-to-right
column additions as employed in the standard persistence algorithm.
The matrix-based persistence algorithm~\cite{EH10} only computes
the reduced matrix $R$ from a filtered boundary matrix $D$ where it implicitly
also computes an upper triangular matrix $V$ with $R=DV$. For linked barcode,
we take the matrix $V$, crop it as algebra dictates, and reduce it further.
Surprisingly, this simple adaptation/addition to the classical left-to-right
column additions provides a more enriched summarization of the input
which has so far been missed in TDA literature. The algorithm runs in $O(n^3)$ time for a filtration that inserts $n$ simplices altogether.

We adapt our basic algorithm to the special
case of graph filtrations which is abundant in applications~\cite{carriere2020perslay,DBLP:conf/iclr/HornBMMRB22,zhang2022gefl,pmlr-v108-zhao20d}. For this case,
taking advantage of the special structures of graphs, we get an
$O(n\log n)$ time algorithm using some dynamic data structures as in~\cite{DHP23}. 
The link barcode as such is not stable at the module level. 
However, we show that it can be made
stable by considering a ``reference'' filtration. This means that the links 
as connectors between persistence bars with reference to another filtration
remain stable.

Finally, we present two experimental results to illustrate that linked barcode
indeed has more discriminating power in practice than the standard homological barcode. Firstly, we define distances on the set of graphs using both the link barcode and the standard persistence barcode. We then show the existence of non-isomorphic graphs that are indistinguishable by their persistence barcodes, yet can be distinguished by their link barcodes. Secondly, we perform temporal link prediction by extracting local structural features around vertex pairs via both barcodes. Our results demonstrate that utilizing the link barcode yields better predictive accuracy than relying solely on the persistence barcode.


\begin{figure}[htbp]
\centerline{\includegraphics[width=\textwidth]{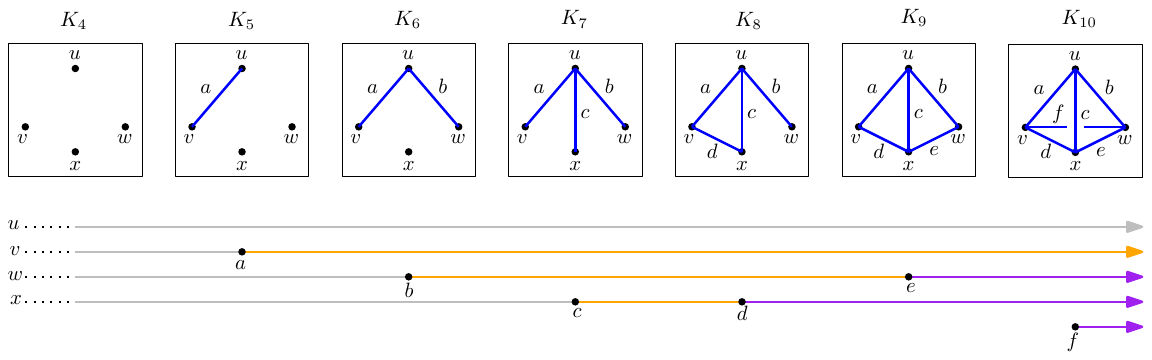}}
\caption{$1$-link barcode of $\Lsf_1$ (orange) connecting barcodes of $\Hm_0$ (gray) and $\Hm_1$ (purple); see section~\ref{sec:linkedbarcode} for
definitions of link module $\Lsf_p$ and $p$-link barcode.}
\label{fig:linkedbar}
\end{figure}

\section{Persistence modules}
We briefly review
definitions of persistence modules  as quiver representations;
see, e.g.~\cite{Oudot15,Schiffler14}
for more details.
A \emph{persistence module} (or \emph{module} for short), as depicted in the top row of~\eqref{MPM}, is a functor $\M\colon \mathbb{N}_0\rightarrow \mathbf{vec_{\field}}$, where $\mathbf{vec_{\field}}$ denotes the category of \emph{finite}-dimensional vector spaces over a field $\field$. It is of \emph{finite type}, if there exists an $m\in\mathbb{N}_0$ such that $\M(i\leq j)$ is an isomorphism for all $i\geq m$. 
In this paper, we only consider persistence modules of finite type.
We sometimes truncate such a module by showing only the first 
$m$ vector spaces and their connecting morphisms;
see, e.g.,~\eqref{eqn:C}--\eqref{eqn:B}.

A morphism of persistence modules $\phi\colon \M\rightarrow \N$, as depicted in~\eqref{MPM},
\begin{equation} \label{MPM}
\begin{tikzcd}[column sep=huge]
\M: \arrow[d,swap,shift left=-2pt,"\phi"] &[-45pt] \M(0) \arrow[r,"\M(0\leq 1)"] \arrow[d,"\phi_0"] &[-5pt] \M(1) \arrow[r,"\M(1\leq 2)"] \arrow[d,"\phi_1"] &[-5pt] \cdots \arrow[r,"\M(m- 1\leq m)"] &[5pt] \M(m) \arrow[d,"\phi_m"] \arrow[r,"\M(m\leq m+1)"] &[5pt] \cdots \\
\N: & \N(0) \arrow[r,"\N(0\leq 1)"] & \N(1) \arrow[r,"\N(1\leq 2)"] & \cdots \arrow[r,"\N(m- 1\leq m)"] & \N(m) \arrow[r,"\N(m\leq m+1)"] & \cdots
\end{tikzcd}
\end{equation}
is a natural transformation of functors $\mathbb{N}_0\rightarrow \mathbf{vec_{\field}}$. 
A persistence module $\N$ is a \emph{submodule} of $\M$ if $\N(i)\subseteq \M(i)$ and
$\N(i\leq i+1)$ is a restriction of $\M(i\leq i+1)$ to $\N(i)$ for all $i\geq 0$.
A submodule $\N$ of $\M$ is called its 
\emph{summand} 
if there is a submodule $\N'$ of $\M$ so that
$\M= \N\bigoplus \N'$. 
An \emph{interval module} $\mathbb{I}^{[\lbirth,\ldeath)}$, over an interval $[\lbirth,\ldeath)\subseteq\mathbb{N}_0$,
is a persistence module 
where
\begin{equation}\label{eqn:intvmod}
\mathbb{I}^{[\lbirth,\ldeath)}(i) \cong
\begin{cases}
\field & \text{if } i \in [\lbirth,\ldeath) \\
0 & \text{otherwise}
\end{cases}
\end{equation}
Moreover, $\mathbb{I}^{[\lbirth,\ldeath)}(i\leq j)$ is an isomorphism 
if $i,j\in[\lbirth,\ldeath)$
and is a zero map otherwise.
By the Theorems of Krull-Remak-Schmidt~\cite[Theorem 1]{azumaya1950corrections} and Gabriel~\cite[Chapter 2.2]{Gabriel72}, 
every persistence module of finite type
can be written as finite direct sum of interval modules
\begin{equation}\label{eqn:intv-decomp}
\M=\bigoplus_{i=1}^s \mathbb{I}^{[\lbirth_i,\ldeath_i)},
\end{equation}
where each $\mathbb{I}^{[\lbirth_i,\ldeath_i)}$ is a summand of $\M$.
The multiset of intervals, also called \emph{bars},
\[\{[\lbirth_i,\ldeath_i)\mid i=1,\ldots,s\}\] 
form the \emph{barcode} of $\M$.
We notice that
the decomposition of $\M$ as in Eq.~\eqref{eqn:intv-decomp} is typically written as 
$\M$ being \emph{isomorphic} (i.e., $\cong$) to a direct sum of interval modules.
However, if this is true, then we could always find a decomposition where each interval
module is a submodule, so that the pointwise vector spaces of $\M$ become equal
to a direct sum of vector spaces of the interval modules in the decomposition.
So, we turn ``$\cong$'' into an equality. 
Having the interval submodules as summands 
is essential for definitions that follow.

A persistence module is called \emph{free} if the barcode contains only infinite bars,
i.e., $\ldeath_i=\infty$ for each $i$ in Eq.~\eqref{eqn:intv-decomp}.
Suppose that a free module $\M$ has a decomposition 
$\M=\bigoplus_{i=1}^s \mathbb{I}^{[\lbirth_i,\infty)}$.
A non-zero $g_i\in\mathbb{I}^{[\lbirth_i,\infty)}(\lbirth_i)$ is called a \emph{generator}
of $\M$, where $\lbirth_i$ is called the \emph{grade} of $g_i$
and is denoted as $\gr(g_i)$.
A set of generators $\{g_1,\ldots,g_s\}$ for all
the infinite bars in a decomposition is called a \emph{generating set} of $\M$.
It is evident that each $x\in\M(j)$ can be written as
$x=\sum_{i=1}^t \lG_i\cdot \M(\lbirth_i\leq j)(g_i)$ 
for some scalars $\lG_1,\ldots,\lG_t\in \field$
where $\lbirth_1,\ldots,\lbirth_t\leq j$.

The well-known unique interval decompositions of $1$-parameter persistence modules~\cite{Oudot15,ZC05} implicitly imply
the following result.
\begin{proposition}
    A persistence module $\M$ is free iff every morphism $\M(i\leq j)$ is injective.
\label{prop:free}
\end{proposition}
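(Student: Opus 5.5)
The plan is to use the interval decomposition in Eq.~\eqref{eqn:intv-decomp} and work one summand at a time. Because the interval modules are submodules and $\M$ is pointwise their direct sum, each structure map splits as
\[
\M(i\leq j)=\bigoplus_{k=1}^s \mathbb{I}^{[\lbirth_k,\ldeath_k)}(i\leq j).
\]
A direct sum of linear maps is injective exactly when every summand map is injective. So the statement reduces to the following claim about a single interval module: all the maps $\mathbb{I}^{[\lbirth,\ldeath)}(i\leq j)$ are injective if and only if $\ldeath=\infty$.

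For the forward direction of the claim, suppose $\ldeath=\infty$. Take $i\leq j$. If $i<\lbirth$, the domain $\mathbb{I}^{[\lbirth,\infty)}(i)$ is $0$, so the map is trivially injective. If $i\geq\lbirth$, then $j\geq\lbirth$ as well, so both $i,j$ lie in $[\lbirth,\infty)$. By the definition of interval modules the map is then an isomorphism, hence injective.

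For the converse, suppose $\ldeath<\infty$. Since bars are nonempty, $\lbirth<\ldeath$, and I take $i=\ldeath-1\in[\lbirth,\ldeath)$ and $j=\ldeath$. The map $\mathbb{I}(i\leq j)$ goes from $\field$ to $0$, so it is not injective. Its restriction to that summand is therefore a nonzero element of $\M(i)$ that is sent to $0$ by $\M(i\leq j)$. Combining this with the forward direction, $\M$ is free iff every $\M(i\leq j)$ is injective.

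The argument is essentially bookkeeping. The one point that needs care is the step where injectivity of $\M(i\leq j)$ is transferred to the individual summands. This relies on the decomposition being an internal direct sum of submodules, as the paper arranges, so that each $\M(i\leq j)$ maps every summand into itself. I would also note explicitly that freeness is well defined, independent of the chosen decomposition, by uniqueness of the barcode. In fact the characterization just proved gives this independence directly.
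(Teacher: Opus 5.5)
Your proof is correct and follows the route the paper has in mind. The paper gives no explicit argument and simply attributes the result to the interval decomposition. Your summand-by-summand check supplies exactly the bookkeeping that this appeal leaves implicit: an internal direct sum of maps is injective iff each summand map is, and $\mathbb{I}^{[\lbirth,\ldeath)}(\ldeath-1\leq\ldeath)$ is the non-injective map $\field\to 0$ precisely when $\ldeath<\infty$.
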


\section{Linked barcode from filtration}
\label{sec:linkbarcode}

\subsection{Link module and barcode}\label{sec:persmod}

Usually, in TDA, we construct a persistence module from a given
simplicial filtration considering the homology functor in degree $p$ that maps each complex
in the filtration to its $p$-th homology group 
over $\field$ and each inclusion
to a linear map. Here, 
we consider other functors as well given by the \emph{chain}, \emph{cycle}, and \emph{boundary} spaces
of simplicial complexes.

A \emph{simplex-wise filtration} is a finite sequence of nested (finite) simplicial complexes. Using
the notation $\mathbf{simp}$ for the category of finite simplicial
complexes with morphisms being inclusions, a filtration $\Fcal$ can be seen as a functor
$\Fcal: \mathbb{N}_0\rightarrow \mathbf{simp}$ of finite type
where there exists an $m\geq 0$ so that $\Fcal(i\leq i+1)$ is identity for every $i\geq m$.
We write a simplex-wise filtration $\Fcal$ explicitly as:
\begin{equation}
\Fcal: K_0\inctosp{\sigma_1}K_1\inctosp{\sigma_2}\cdots\inctosp{\sigma_{m}} K_m\mbox{, where } 
K_{i}\setminus K_{i-1} =\{\sigma_{i}\} \mbox{ for } 0< i\leq m.
\label{eq:filtration}
\end{equation}

Given a simplicial complex $K$,
let $C_p(K)$, $Z_p(K)$, and $B_p(K)$ denote the groups of $p$-chains, $p$-cycles, and $p$-boundaries
respectively with coefficients in $\field$. These groups form nested $\field$-vector spaces
$B_p(K)\subseteq Z_p(K)\subseteq C_p(K)$. An inclusion
$K_i\inctosp{} K_{i+1}$ induces \emph{injective} linear maps 
which can also be considered as chain inclusions
in the respective chain, cycle, and boundary
spaces: $C_p(K_i)\incto C_p(K_{i+1})$, $Z_p(K_i)\incto Z_p(K_{i+1})$, and
$B_p(K_i)\incto B_p(K_{i+1})$.
A filtration $\Fcal$ as in~\eqref{eq:filtration} induces different persistence modules, namely, $p$-chain module $\C_p:=\C_p(\Fcal)$, and
$p$-cycle module $\Z_p:=\Z_p(\Fcal)$, and $p$-boundary module $\B_p:=\B_p(\Fcal)$ as: 
\begin{align}
\label{eqn:C}&\C_p: C_p(K_0)\incto C_p(K_1)\incto \cdots\incto  C_p(K_m),\\
\label{eqn:Z}&\Z_p: Z_p(K_0)\incto Z_p(K_1)\incto \cdots\incto  Z_p(K_m),\\
\label{eqn:B}&\B_p: B_p(K_0)\incto B_p(K_1)\incto \cdots\incto  B_p(K_m).
\end{align}

Furthermore, one has the following \emph{homology} module
formed by quotienting cycle spaces with boundaries and considering morphisms 
between them induced
by inclusions
\begin{equation*}
\Hm_p: Z_p(K_0)/B_p(K_0)\rightarrowsp{}Z_p(K_1)/B_p(K_1)\rightarrowsp{}\cdots\rightarrowsp{} Z_p(K_m)/B_p(K_m).
\end{equation*}
Above, each $Z_p(K_i)/B_p(K_i)$ is the $p$-th homology group
of $K_i$
and is  denoted
as $H_p(K_i)$.
Moreover, let $H(K_i):=\bigoplus_{p}H_p(K_i)$ be the homology group
over all dimensions.
Our main aim is to connect the homology module with another module which we call
the \emph{link} module. Next, we prepare for defining this link module beginning with the following result \cite[pp. 121]{EH10}.
\begin{proposition}
For each inclusion $K_{i-1}\inctosp{\sG_i}K_{i}$,
one has that either $\dim(H(K_{i}))=\dim(H(K_{i-1}))+1$
or  $\dim(H(K_{i}))=\dim(H(K_{i-1}))-1$.
\end{proposition}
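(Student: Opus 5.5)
Let $\sG_i$ be a $p$-simplex and write $K=K_{i-1}$, $K'=K_i$. I would compare the chain, cycle and boundary spaces of $K$ and $K'$ dimension by dimension, using that only the degree-$p$ chains and the boundary map $\partial_p$ change. Concretely, $C_p(K')=C_p(K)\oplus\field\,\sG_i$ and $C_q(K')=C_q(K)$ for $q\neq p$. Hence $Z_q$ and $B_q$ are unchanged for $q\notin\{p,p-1\}$. The boundary map $\partial_{p+1}$ is also unchanged, because $\sG_i$ cannot be a face of any $(p+1)$-simplex of $K'$: every face of a simplex of $K'$ other than $\sG_i$ already lies in $K$. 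So $B_p(K')=B_p(K)$. The only spaces that can change are $Z_p$ and $B_{p-1}$.

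Next, I apply rank--nullity to $\partial_p$ on $C_p(K)$ and on $C_p(K')$. Since $\dim C_p$ increases by one, we get
\[
\bigl(\dim Z_p(K')-\dim Z_p(K)\bigr)+\bigl(\dim B_{p-1}(K')-\dim B_{p-1}(K)\bigr)=1.
\]
Both differences are nonnegative because the spaces are nested. Therefore exactly one of them equals $1$ and the other equals $0$. This gives two cases.

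\emph{Case 1: $\partial\sG_i\in B_{p-1}(K)$.} Choose $c\in C_p(K)$ with $\partial c=\partial\sG_i$. Then $\sG_i-c$ is a new $p$-cycle, so $\dim Z_p$ increases by one. Since $B_p$ is unchanged, $\dim H_p$ increases by one, while $B_{p-1}$ and every other $H_q$ are unchanged. Hence $\dim H(K')=\dim H(K)+1$.

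\emph{Case 2: $\partial\sG_i\notin B_{p-1}(K)$.} Then $\dim B_{p-1}$ increases by one, so $Z_p$ is unchanged. Since $Z_{p-1}$ is also unchanged (only degree-$p$ chains were added), $\dim H_{p-1}$ drops by one and every other $H_q$ is unchanged. Hence $\dim H(K')=\dim H(K)-1$.

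For $p=0$, take $\partial_0=0$ and $B_{-1}=0$, so $\partial\sG_i=0\in B_{-1}(K)$ and we are always in Case 1: a new vertex creates a component. (If reduced homology were intended, the first vertex would instead be handled by the augmentation map. The paper's convention is unreduced, so no special treatment is needed.)

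The only step needing care is the claim that $\partial_{p+1}$ is unchanged, i.e.\ that $B_p(K')=B_p(K)$. This follows from $K'$ being a simplicial complex with $K'\setminus K=\{\sG_i\}$: no simplex of $K'$ can have $\sG_i$ as a proper face. The remaining steps are routine dimension counting.
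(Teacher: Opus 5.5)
Your proof is correct, and it takes essentially the same approach as the paper. The paper gives no argument of its own and only cites Edelsbrunner--Harer, whose proof is the same case split you use: either $\sigma_i$ lies in a $p$-cycle of $K_i$ (equivalently $\partial\sigma_i\in B_{p-1}(K_{i-1})$), and a new $p$-class appears because no $(p+1)$-simplex contains $\sigma_i$; or it does not, and $\partial\sigma_i$ kills a $(p-1)$-class. Your rank--nullity count just makes explicit that exactly one of $Z_p$ and $B_{p-1}$ grows.
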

\begin{definition}\label{dfn:posneg}
Call a simplex $\sG_i$ in $\Fcal$ \emph{positive} if 
$\dim(H(K_{i}))=\dim(H(K_{i-1}))+1$
and call it \emph{negative} otherwise.
Let $\sset_p(K_i)$ denote the set of $p$-simplices
of $K_i$ 
and $\pset_p(K_i)\subseteq \sset_p(K_i)$  the subset of positive $p$-simplices.
Define a subspace $X_p(K_i)\subseteq C_p(K_i)$
to consist of all chains in $C_p(K_i)$ supported over 
$\pset_p(K_i)$
and 
$Y_p(K_i)\subseteq C_p(K_i)$
to consist of all chains in $C_p(K_i)$ supported over 
$\sset_p(K_i)\setminus \pset_p(K_i)$.
We then have the following modules $\X_p$, $\Y_p$
over the chain spaces $X_p(K_i)$, $Y_p(K_i)$
connected by chain inclusions:
\begin{align*}
&\X_p\,:\, X_p(K_0) \incto X_p(K_1)\incto\cdots\incto X_p(K_m),\\
&\Y_p\,:\, Y_p(K_0) \incto Y_p(K_1)\incto\cdots\incto Y_p(K_m).
\end{align*}
\end{definition}
\begin{proposition}
    (i) $\forall i$, $C_p(K_i)=X_p(K_i)\oplus Y_p(K_i)$, and (ii) $C_p(K_i)=Z_p(K_i)\oplus Y_p(K_i)$.
    \label{prop:point-split}
\end{proposition}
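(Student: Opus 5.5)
Part (i) is immediate from the definitions: $X_p(K_i)$ and $Y_p(K_i)$ are the spans of complementary subsets of the standard basis $\sset_p(K_i)$ of $C_p(K_i)$. Namely, $X_p(K_i)$ is spanned by $\pset_p(K_i)$ and $Y_p(K_i)$ by $\sset_p(K_i)\setminus\pset_p(K_i)$. Every chain therefore splits uniquely by restricting its support, which gives $C_p(K_i)=X_p(K_i)\oplus Y_p(K_i)$.

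For (ii), I will first record the standard characterization of positive and negative simplices. By the preceding Proposition, adding $\sG_j$ (a $q$-simplex) changes $\dim H(K_j)$ by exactly $\pm1$. By rank--nullity for $\partial_q$ restricted to $C_q(K_j)$, exactly one of two things happens. Either $\dim Z_q$ goes up by one, so a new $q$-cycle containing $\sG_j$ appears and $H_q$ increases. Or $\dim B_{q-1}$ goes up by one, so $\partial\sG_j\notin B_{q-1}(K_{j-1})$ and $H_{q-1}$ decreases. Hence $\sG_j$ is positive iff $\partial \sG_j\in \partial C_q(K_{j-1})$, i.e.\ iff some cycle in $Z_q(K_j)$ has $\sG_j$ in its support with coefficient $1$. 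In particular $\dim Z_p(K_i)=|\pset_p(K_i)|$, obtained by inducting on $j$ and counting. Then $\dim Z_p(K_i)+\dim Y_p(K_i)=|\sset_p(K_i)|=\dim C_p(K_i)$. So it suffices to show $Z_p(K_i)\cap Y_p(K_i)=0$.

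For the trivial intersection, suppose $0\neq c\in Z_p(K_i)$ is supported only on negative simplices. Let $\sG_j$ be the simplex of largest index in the support of $c$, normalized to coefficient $1$. Then $\partial\sG_j=-\partial(c-\sG_j)$, and $c-\sG_j\in C_p(K_{j-1})$. So $\partial\sG_j\in\partial C_p(K_{j-1})$, which by the characterization above makes $\sG_j$ positive, a contradiction. Thus the intersection is zero and the dimension count gives (ii).

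The main obstacle is the characterization step: cleanly deriving from the $\pm1$ proposition that positivity is equivalent to $\partial\sG_j$ already being a boundary in $K_{j-1}$. The plan is to compare $\dim Z_q$ and $\dim B_{q-1}$ before and after, noting that their changes sum to one.
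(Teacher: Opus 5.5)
Your proof is correct and has the same structure as the paper's: (i) comes from partitioning the simplex basis, and (ii) comes from $\dim Z_p(K_i)=|\Pi_p(K_i)|$ together with $Z_p(K_i)\cap Y_p(K_i)=\{0\}$. The paper gets the dimension equality by citing the standard basis of cycles $z_\sigma$ indexed by positive simplices and calls the trivial intersection ``by definition''; you instead derive the count by rank--nullity and supply the last-simplex-is-positive argument that the intersection claim actually requires.
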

\begin{proof}
   (i) follows straightforwardly because the set of $p$-simplices in $K_i$ constitutes a basis for $C_p(K_i)$ and it can be partitioned into positive and negative simplices forming
    a basis for $X_p(K_i)$ and $Y_p(K_i)$ respectively.
    
    (ii) follows because there is a bijection from a basis of $X_p(K_i)$ to a basis of $Z_p(K_i)$
    as follows: for every positive simplex $\sigma$ choose a  $p$-cycle $z_\sigma$ in $Z_p(K_i)$ which has $\sigma$ as the  last $p$-simplex in the filtration among its supporting simplices. It is well known that this set of $p$-cycles forms a basis of 
    $Z_p(K_i)$. The bijective mapping $\sigma\mapsto z_\sigma$ induces an isomorphism $X_p(K_i)\cong Z_p(K_i)$ both being subspaces of $C_p(K_i)$.
    Furthermore, $Z_p(K_i)\cap Y_p(K_i)=\{0\}$ by definition, completing the proof that $C_p(K_i)=Z_p(K_i)\oplus Y_p(K_i)$.
\end{proof}
\begin{proposition}
The following holds for each $p$:
    \begin{enumerate}
        \item[1.] $\C_p$, $\Z_p$, $\X_p$, and $\Y_p$  are free;
        \item[2.] $\C_p=\Z_p\oplus \Y_p$.
    \end{enumerate}
\label{prop:basic}
\end{proposition}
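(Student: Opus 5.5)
Part 1 follows from Proposition~\ref{prop:free}: it suffices to check that every structure map $\M(i\leq j)$ is injective for $\M\in\{\C_p,\Z_p,\X_p,\Y_p\}$. In each case the structure map is the restriction of the chain inclusion $C_p(K_i)\incto C_p(K_j)$ to the relevant subspace. This inclusion is injective because $K_i\subseteq K_j$, and a chain supported on $\sset_p(K_i)$ stays nonzero when viewed in $C_p(K_j)$.

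For $\X_p$ and $\Y_p$ we must also check that these restrictions actually land in $X_p(K_j)$ and $Y_p(K_j)$, so that the modules are well defined. This holds because positivity of a simplex $\sG_k$ depends only on the step $K_{k-1}\inctosp{\sG_k}K_k$. Hence $\pset_p(K_i)\subseteq\pset_p(K_j)$, and likewise $\sset_p(K_i)\setminus\pset_p(K_i)\subseteq \sset_p(K_j)\setminus\pset_p(K_j)$. For $\Z_p$, a cycle in $K_i$ is still a cycle in $K_j$, since the boundary operator commutes with inclusion.

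For part 2, Proposition~\ref{prop:point-split}(ii) already gives the pointwise splitting $C_p(K_i)=Z_p(K_i)\oplus Y_p(K_i)$ for every $i$. The definition of direct sum of submodules needs two further things. First, $\Z_p$ and $\Y_p$ must be submodules of $\C_p$, meaning their structure maps are restrictions of those of $\C_p$. This is immediate, since all maps are chain inclusions and, as above, they preserve the subspaces. Second, the decomposition must be compatible with the structure maps, i.e.\ $\C_p(i\leq j)$ sends $Z_p(K_i)$ into $Z_p(K_j)$ and $Y_p(K_i)$ into $Y_p(K_j)$. This is exactly the preservation already established. So $\C_p=\Z_p\oplus\Y_p$ as persistence modules.

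The only point needing care, and the main obstacle such as it is, is that the sign of a simplex is fixed at its insertion and does not change later. I would state this explicitly from Definition~\ref{dfn:posneg}: each $\sG_k$ is classified once, by comparing $\dim H(K_k)$ with $\dim H(K_{k-1})$. The set $\pset_p(K_i)$ is therefore simply the set of $p$-simplices $\sG_k$ with $k\leq i$ that are positive in this sense, and it is monotone in $i$.
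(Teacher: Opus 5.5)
Your proposal is correct and follows the paper's proof. Freeness comes from injectivity of the inclusion-induced structure maps (Proposition~\ref{prop:free}), and $\C_p=\Z_p\oplus\Y_p$ comes from the pointwise splitting of Proposition~\ref{prop:point-split} together with $\Z_p$ and $\Y_p$ being submodules of $\C_p$; your explicit check that a simplex's sign is fixed at insertion, so that $\X_p$ and $\Y_p$ are well-defined under inclusions, is a detail the paper leaves implicit.
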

\begin{proof}
(1) follows because the structural homomorphisms $\C_p(i\leq j)$,
$\Z_p(i\leq j)$, $\X_p(i\leq j)$, and $\Y_p(i\leq j)$ are injective.
(2) follows because both $\Z_p$ and $\Y_p$ are submodules of
$\C_p$ and they are summands point-wise, $C_p(K_i)=Z_p(K_i)\oplus Y_p(K_i)$ (Proposition~\ref{prop:point-split}) for $0\leq i\leq m$. 
\end{proof}

It follows from Proposition~\ref{prop:point-split} that 
    for any $z_p\in Z_p(K_i)$, one has a unique sum
    $z_p=x_p + y_p$ where $x_p\in X_p(K_i)$ and
    $y_p\in Y_p(K_i)$.
Then, we have a well defined
morphism $\chi:\Z_p\to \X_p\oplus \Y_p$ where each point-wise $\chi_i: Z_p(K_i)\rightarrow X_p(K_i)\oplus Y_p(K_i)$
is defined as:
\begin{equation}\label{eqn:split}
\chi_i:z\mapsto(x,y)\text{ so that }z=x+y.
\end{equation}
We furthermore consider a ``projection'' morphism $\phi:\Z_p\to\Y_p$
\begin{equation} 
\label{eqn:phi}
\begin{tikzcd}[column sep=huge]
\Z_p: \arrow[d,swap,shift left=-2pt,"\phi"] &[-45pt] Z_p(K_0) \arrow[hookrightarrow,r,"\iG"] \arrow[d,"\phi_0"] &[-5pt]  Z_p(K_1)  \arrow[hookrightarrow,r,"\iG"] \arrow[d,"\phi_1"] &[-5pt] \cdots \arrow[hookrightarrow,r,"\iG"] &[5pt]  Z_p(K_m) \arrow[d,"\phi_m"] \\
\Y_p: & Y_p(K_0) \arrow[hookrightarrow,r,"\iG"] & Y_p(K_1) \arrow[hookrightarrow,r,"\iG"] & \cdots \arrow[hookrightarrow,r,"\iG"] & Y_p(K_m) 
\end{tikzcd}
\end{equation}
Above,
each point-wise $\phi_i: Z_p(K_i)\rightarrow Y_p(K_i)$
is defined as
\begin{equation}\label{eqn:proj}
\phi_i:z\xmapsto{\;\chi_i\;}(x,y)\xmapsto{\;\pi_i\;} y,
\end{equation}
where $\pi_i$ maps a tuple $(x,y)$ to its second component $y$.
In words, $\phi_i$ maps a $p$-cycle $z$ in $K_i$ to the $p$-chain $y$ in $K_i$ 
where $y$ morphs into $z$ after the addition of some positive $p$-simplices in the chain $x$.
It is easy to see that 
the $\phi_i$'s commute with the horizontal inclusions $\iG$'s in the diagram in \Cref{eqn:phi},
and hence $\phi$ is a well-defined morphism
between persistence modules.\begin{definition}[$p$-link module and barcode]\label{dfn:lmod}
We define the \emph{$p$-link module} as the cokernel module $\Lsf_p:=\coker(\phi)$, and explicitly
\begin{equation} 
\label{eqn:lmod}
\Lsf_p\,:\, Y_p(K_0)/\img(\phi_0) \rightarrowsp{} Y_p(K_1)/\img(\phi_1)\rightarrowsp{}\cdots\rightarrowsp{} Y_p(K_m)/\img(\phi_m).
\end{equation} 
In \Cref{eqn:lmod}, the connecting morphisms are
induced by inclusions, i.e.\ for $y\in Y_p(K_i)$,
\[y+\img(\phi_i)\mapsto \iG(y)+\img(\phi_{i+1}),\]
which is well defined {\rm(}i.e., $w\in\img(\phi_i)$ implies 
$\iG(w)\in\img(\phi_{i+1})${\rm)} by the commutativity of the diagram in
\Cref{eqn:phi}.
We call the barcode of $\Lsf_p$ the
\emph{$p$-link barcode} and its bars the \emph{$p$-link bars}.
\end{definition}

Observe that the cycle module $\Z_0$ coincides with
the chain module $\C_0$ and hence the module $\Y_0$ is trivial. 
As a result, $\Lsf_0=0$ 
and we focus on the $p$-link modules $\Lsf_p$ for $p>0$.
We call them $p$-link modules
for the given filtration $\Fcal$ because 
the finite bars in the barcode of $\Lsf_p$ connect
the bars in the barcode of the homology module $\Hm_{p-1}$ with the bars in the barcode
of the homology module $\Hm_p$. Furthermore, each infinite bar in the barcode of $\Lsf_p$
begins with a right end point of a bar in the barcode of $\Hm_{p-1}$. 
See \Cref{thm:link} for a formal statement of these facts.

\cancel{
Next, consider the set of chains $W_p(K_i)$ as defined below:
\begin{equation}
W_p(K_i)=\{y_p(K_i)\in Y_p(K_i)\,|\, y_p(K_i)+x_p(K_i)\in Z_p(K_i) \mbox{ for some } x_p(K_i)\in X_p(K_i)\}.
\label{eq:w}
\end{equation}
In words, $W_p(K_i)$ are those $p$-chains in $C_p(K_i)$ that do not form $p$-cycles
under $\field$-additions among themselves but do so with an addition from chains formed
by last $p$-simplices that complete $p$-cycles.
\begin{proposition}
$W_p(K_i)$ is a subspace of $Y_p(K_i)$.
\end{proposition}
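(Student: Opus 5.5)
The plan is to identify $W_p(K_i)$ with the image of a linear map, since images of linear maps are automatically subspaces. Recall from \Cref{eqn:proj} the point-wise projection $\phi_i\colon Z_p(K_i)\to Y_p(K_i)$, obtained by composing the splitting $\chi_i$ of \Cref{eqn:split} with the projection $\pi_i$. The claim I will prove is
\[
W_p(K_i)=\img(\phi_i).
\]
The statement then follows at once.

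First I check that $\phi_i$ is linear. By Proposition~\ref{prop:point-split}(i), $C_p(K_i)=X_p(K_i)\oplus Y_p(K_i)$. So the decomposition $c=x+y$ is unique for every chain $c$, and the map $c\mapsto y$ is the linear projection of $C_p(K_i)$ onto $Y_p(K_i)$ along $X_p(K_i)$. The map $\phi_i$ is the restriction of this projection to the subspace $Z_p(K_i)$, so it is linear.

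Next I prove the two inclusions.

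\emph{($\supseteq$).} Take $y=\phi_i(z)$ for some cycle $z$. Write $z=x+y$ with $x\in X_p(K_i)$. Then $y+(-x)=z-2x$ is not what we want, so instead I use the definition directly: $y+x=z\in Z_p(K_i)$ with $x\in X_p(K_i)$. This is exactly the membership condition for $W_p(K_i)$ in \eqref{eq:w}.

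\emph{($\subseteq$).} Take $y\in W_p(K_i)$. Then there is $x\in X_p(K_i)$ with $z:=y+x\in Z_p(K_i)$. By uniqueness of the decomposition in $X_p(K_i)\oplus Y_p(K_i)$, we get $\chi_i(z)=(x,y)$. Hence $\phi_i(z)=y$, so $y\in\img(\phi_i)$.

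As a fallback, one can verify the subspace axioms directly. The zero chain lies in $W_p(K_i)$ because $0+0=0$ is a cycle. If $y,y'\in W_p(K_i)$ are witnessed by $x,x'$, then $\lambda y+\mu y'$ is witnessed by $\lambda x+\mu x'$. This works because $Z_p(K_i)$ and $X_p(K_i)$ are both subspaces, so
\[
(\lambda y+\mu y')+(\lambda x+\mu x')=\lambda(y+x)+\mu(y'+x')\in Z_p(K_i).
\]

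I do not expect any real obstacle. The only point needing care is the uniqueness of the decomposition, which comes from Proposition~\ref{prop:point-split}(i) and is what makes $\phi_i$ well defined and linear.
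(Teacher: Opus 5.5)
Your proof is correct, but your main argument takes a different route from the paper. The paper's own proof is an unfinished sketch. It only plans to check directly that $W_p(K_i)$ is closed under addition and concludes that it is a subgroup of $Y_p(K_i)$. Your fallback argument is the completed version of that sketch. It also supplies what the sketch leaves out for a general field $\field$: the zero chain, and closure under scalar multiplication, witnessed by $\lambda x+\mu x'$.

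Your primary argument instead identifies $W_p(K_i)=\img(\phi_i)$. The inclusion $\subseteq$ uses uniqueness of the decomposition in $C_p(K_i)=X_p(K_i)\oplus Y_p(K_i)$ from Proposition~\ref{prop:point-split}(i). This makes the subspace property automatic, and it gives more than the direct check. Since the $\phi_i$ commute with the inclusions, the spaces $W_p(K_i)$ assemble into the image submodule $\img(\phi)\subseteq\Y_p$. The link module of Definition~\ref{dfn:lmod} is then exactly $\Lsf_p=\Y_p/\img(\phi)$, i.e.\ $\Lsf_p(i)=Y_p(K_i)/W_p(K_i)$, which shows how the $W_p$ construction relates to the cokernel definition. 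The direct check is shorter and needs neither $\chi_i$ nor $\phi_i$.

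Minor point: delete the stray sentence about $y+(-x)=z-2x$ in your $\supseteq$ step, since it plays no role.
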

\begin{proof}
    We only need to show that addition of any two elements $y_p(K_i)$ and $y'_p(K_i)$
    in $W_p(K_i)$ creates an element in $W_p(K_i)$ which implies that $W_p(K_i)$
    is a subgroup of $Y_p(K_i)$...to be completed. 
\end{proof}
Now consider the persistence module
$$
\W_p: W_p(K_0)\rightarrowsp{}W_p(K_1)\rightarrowsp{}\ldots\rightarrowsp{} W_p(K_m)
$$
where the maps $W_p(K_i)\rightarrowsp{}W_p(K_{i+1})$, induced by inclusions, are injective.
\begin{proposition}
    $\W_p$ is a free summand of $\Y_p$.
    \label{prop:wy}
\end{proposition}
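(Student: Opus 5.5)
The plan has two parts: freeness first, which is routine, and then the summand property, which is where the real content lies. Throughout I read $W_p(K_i)$ as exactly $\img(\phi_i)$. A chain $y\in Y_p(K_i)$ lies in $W_p(K_i)$ iff $y+x\in Z_p(K_i)$ for some $x\in X_p(K_i)$. By the uniqueness of the splitting in \Cref{prop:point-split}, this is the same as $y=\phi_i(y+x)$. In particular $W_p(K_i)$ is a linear subspace of $Y_p(K_i)$, which disposes of the unfinished proposition above, and $\W_p=\img(\phi)$ is a submodule of $\Y_p$ by the commutativity of \eqref{eqn:phi}.

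\textbf{Freeness.} The structure maps of $\W_p$ are restrictions of the chain inclusions $\iota$. They are therefore injective, and \Cref{prop:free} gives that $\W_p$ is free.

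\textbf{Reduction for the summand property.} Since $\Y_p$ is free and finite type, a submodule is a summand exactly when it admits a complement compatible with the structure maps. I would reduce this to a saturation condition:
\begin{equation*}
W_p(K_j)\cap Y_p(K_i)=W_p(K_i)\quad\text{for all } i\le j .
\end{equation*}
Given this condition, I would build the complement $\W'$ inductively along the filtration. At step $i$, extend the image of $W'(K_{i-1})$ by new basis vectors of $Y_p(K_i)$ so that $Y_p(K_i)=W_p(K_i)\oplus W'(K_i)$. Saturation guarantees that $\iota(W'(K_{i-1}))\cap W_p(K_i)=0$, so this extension is always possible. Only step $\sigma_i$ changes anything, and it changes $Y_p$ by at most one dimension, which keeps the bookkeeping to a single vector per step.

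\textbf{Main obstacle: proving saturation.} Let $y\in Y_p(K_i)$ and suppose $y+x\in Z_p(K_j)$ with $x\in X_p(K_j)$. We need to replace $x$ by some $x'$ supported on $\pset_p(K_i)$.

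My first attempt is induction on the last positive simplex $\sigma$ of $x$ lying outside $K_i$. Subtract $\lambda z_\sigma$, where $z_\sigma$ is the cycle from the proof of \Cref{prop:point-split} and $\lambda$ cancels the coefficient of $\sigma$. This removes $\sigma$, but it adds to $y$ the negative part $\phi(z_\sigma)$. That part may contain negative simplices outside $K_i$, and then the new chain no longer lies in $Y_p(K_i)$; this is exactly where the argument can break. To control it, I would choose each $z_\sigma$ as the column of the reduced matrix $V$ from the standard algorithm. I would then try to use the pairing structure to show the following: once $y$ is fixed in $K_i$, the negative simplices outside $K_i$ must cancel among these $z_\sigma$, since $\partial(y+x)=0$ forces their contributions to vanish.

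If this cancellation argument fails, it would exhibit a $y\in Y_p(K_i)$ that first enters $\img(\phi)$ at a later index. That is precisely a finite bar of $\Lsf_p$, and the presence of such a bar would indicate that saturation, and hence the summand claim, cannot hold in general. I expect this step to decide whether the statement as worded is provable, so I would check it on a small example, such as a filled triangle, before writing the general argument.
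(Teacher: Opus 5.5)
Your identification $W_p(K_i)=\img(\phi_i)$ is correct, and so is your freeness argument. The freeness argument is actually sounder than the paper's, which gets freeness from ``a summand of a free module is free'' and so depends on the summand claim.

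The gap is the saturation step you left open, and your closing suspicion is right: it cannot be closed, because saturation is false, and with it the summand part of the statement. Take the filtration $a,b,c,ab,bc,ca$ (indices $1,\ldots,6$), set $p=1$, and work over $\mathbb{Z}_2$. The edges $ab$ and $bc$ are negative and $ca$ is positive. So $Z_1(K_5)=0=W_1(K_5)$, while $Z_1(K_6)$ is spanned by $ab+bc+ca$ and $W_1(K_6)$ by $ab+bc$. Hence $ab+bc\in W_1(K_6)\cap Y_1(K_5)$ but $ab+bc\notin W_1(K_5)$.

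Two ways to see that no complement exists:
\begin{itemize}
\item \emph{Directly.} Any complement $\W'$ would need $W'(K_5)=Y_1(K_5)\ni ab+bc$. The image of $ab+bc$ in $Y_1(K_6)$ then lies in $W_1(K_6)\cap W'(K_6)$, which must be zero.
\item \emph{Conceptually.} $\Y_p/\W_p=\Lsf_p$. A complement, being a submodule of $\Y_p$ with injective structure maps, is free by Proposition~\ref{prop:free}. So $\W_p$ is a summand of $\Y_p$ exactly when $\Lsf_p$ has no finite bars. Here $\Lsf_1$ has the finite bar $[5,6)$, and finite link bars are the whole point of the construction (Figure~\ref{fig:linkedbar}).
\end{itemize}

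Your inductive attempt also puts the obstruction in the wrong place. Even when $\phi(z_\sigma)$ stays inside $Y_p(K_i)$, subtracting $\lambda z_\sigma$ replaces $y$ by $y-\lambda\phi(z_\sigma)$. To conclude anything about $y$ itself you would need $\phi(z_\sigma)\in W_p(K_i)$, which need not hold since $z_\sigma\notin Z_p(K_i)$. In the example, $\phi(z_{ca})=ab+bc$ lies in $Y_1(K_5)$ but not in $W_1(K_5)$. Your procedure there turns $y=ab+bc$ into $0$ and says nothing about $y$.

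The paper's one-line proof makes exactly the leap you were careful to avoid. It infers a module summand from each $W_p(K_i)$ being a pointwise summand with restricting structure maps. But every subspace is a pointwise summand, and pointwise complements need not be preserved by the inclusions. Your saturation condition is precisely the missing hypothesis, and it is equivalent to $\Lsf_p$ being free. What survives is that $\W_p=\img(\phi)$ is a free submodule of $\Y_p$, which your first step proves.
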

\begin{proof}
  The fact that it is a summand of $\Y_p$ follows from the fact that point-wise each $W_p(K_i)$ is a summand (subspace) of $Y_p(K_i)$ and the morphisms of $\Y_p$ restricts to $\W_p$ because both are induced by inclusions. Since a summand of a free module is necessarily free, we have that $\W_p$ is free.
\end{proof}
The following corollary is immediate from Proposition~\ref{prop:basic} and Proposition~\ref{prop:wy}.
\begin{corollary}
    $\C_p=\Z_p\oplus \W_p\oplus \U_p$ for some summand $\U_p$.
    \label{cor:cw}
\end{corollary}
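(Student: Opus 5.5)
The plan is to combine the two splittings already available. By Proposition~\ref{prop:basic}(2) we have $\C_p=\Z_p\oplus\Y_p$. By Proposition~\ref{prop:wy}, $\W_p$ is a summand of $\Y_p$, so by the definition of summand there is a submodule $\U_p$ of $\Y_p$ with $\Y_p=\W_p\oplus\U_p$. We then substitute the second decomposition into the first and check that the result is an internal direct sum of submodules of $\C_p$.

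Concretely, I would carry out three steps.
\begin{enumerate}
\item Note that $\U_p$, being a submodule of $\Y_p$, is also a submodule of $\C_p$. Its structure maps are restrictions of those of $\Y_p$, which are restrictions of the chain inclusions of $\C_p$. The same holds for $\W_p$.
\item Verify the decomposition pointwise. For each $i$, Proposition~\ref{prop:point-split}(ii) gives $C_p(K_i)=Z_p(K_i)\oplus Y_p(K_i)$, and $Y_p(K_i)=W_p(K_i)\oplus U_p(K_i)$. Associativity of internal direct sums of vector spaces then yields $C_p(K_i)=Z_p(K_i)\oplus W_p(K_i)\oplus U_p(K_i)$. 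In particular, the three subspaces span $C_p(K_i)$ and meet pairwise trivially, with each one meeting the sum of the other two only in $0$.
\item Check that the pointwise decompositions are compatible with the structure maps. Each $\C_p(i\leq i+1)$ is the chain inclusion, and it sends $Z_p(K_i)$, $W_p(K_i)$ and $U_p(K_i)$ into $Z_p(K_{i+1})$, $W_p(K_{i+1})$ and $U_p(K_{i+1})$ respectively, because all three are submodules. Hence the structure map is the direct sum of its three restrictions, which gives $\C_p=\Z_p\oplus\W_p\oplus\U_p$ as persistence modules.
\end{enumerate}

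As a bonus, Proposition~\ref{prop:free} shows that $\U_p$ is free. Its structure maps are restrictions of injective maps and are therefore injective.

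The only real content lies in the existence of the complement $\U_p$ as a submodule and not merely pointwise. We are taking this from Proposition~\ref{prop:wy}, and beyond it everything is routine bookkeeping. If one wanted to be careful there, I would build $\U_p$ inductively along the filtration. At each index $i$, extend the previously chosen $U_p(K_{i-1})$ together with $W_p(K_i)$ to a complement of $W_p(K_i)$ in $Y_p(K_i)$. This extension requires $U_p(K_{i-1})\cap W_p(K_i)=0$, which needs $W_p(K_i)\cap Y_p(K_{i-1})=W_p(K_{i-1})$. That equality is where I expect the main obstacle to sit.
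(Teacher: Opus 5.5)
You reproduce the paper's own one-line argument: it calls the corollary immediate from $\C_p=\Z_p\oplus\Y_p$ and the proposition that $\W_p$ is a summand of $\Y_p$. That proposition is false, however, and so is the corollary. The step you flagged as the main obstacle is exactly where it breaks.

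First note that $W_p(K_i)=\img(\phi_i)$. If $y+x\in Z_p(K_i)$ with $x\in X_p(K_i)$ and $y\in Y_p(K_i)$, uniqueness in $C_p(K_i)=X_p(K_i)\oplus Y_p(K_i)$ gives $\phi_i(y+x)=y$. Hence $\W_p=\img(\phi)$ and $\Y_p/\W_p=\Lsf_p$.

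Now suppose $\C_p=\Z_p\oplus\W_p\oplus\U_p$ held with $\U_p$ a submodule. Comparing with $\C_p=\Z_p\oplus\Y_p$ gives $\dim U_p(K_i)=\dim Y_p(K_i)-\dim W_p(K_i)=\dim\Lsf_p(i)$ for every $i$. But $\U_p$ has injective structure maps (your own ``bonus'' remark), so this dimension can never drop. Thus $\Lsf_p$ would have no finite bars, contradicting the finite link bars the paper exhibits.

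The smallest counterexample is a triangle with $p=1$: vertices $a,b,c$, then edges $e_1=(a,b)$, $e_2=(b,c)$, $e_3=(a,c)$.
\begin{itemize}
\item After $e_2$ is added, $Z_1=W_1=0$, so $U_1$ must equal $C_1=\langle e_1,e_2\rangle$.
\item After $e_3$ is added, $Z_1=\langle e_1+e_2+e_3\rangle$ and $W_1=\langle e_1+e_2\rangle$, so $U_1$ must be one-dimensional.
\item Yet $U_1$ must contain the two-dimensional space from the previous step.
\end{itemize}
Equivalently, $\Lsf_1$ has the finite bar from $e_2$ to $e_3$.

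In your inductive construction, the equality $W_p(K_i)\cap Y_p(K_{i-1})=W_p(K_{i-1})$ fails at exactly such events. When $\sigma_i$ is a positive $p$-simplex, $Y_p(K_i)=Y_p(K_{i-1})$, so the left side is all of $W_p(K_i)$. This strictly contains $W_p(K_{i-1})$ whenever the $Y$-part of the newly born cycle does not already lie in $W_p(K_{i-1})$; in the triangle, this is $e_1+e_2$ at the step adding $e_3$. These events are precisely the right endpoints of finite bars of $\Lsf_p$.

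The paper's justification of the summand proposition makes the error you suspected. Complements of $W_p(K_i)$ in $Y_p(K_i)$ exist pointwise, but they cannot be chosen compatibly with the inclusions. The finite bars of $\Lsf_p=\coker(\phi)$ measure exactly this obstruction. What is true is only that $\Z_p\oplus\W_p$ is a submodule of $\C_p$ with $\C_p/(\Z_p\oplus\W_p)\cong\Lsf_p$. Since free modules are projective, the corollary holds precisely when $\Lsf_p$ has no finite bars.

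The block that defines $\W_p$ and contains the summand proposition and this corollary is discarded from the compiled paper and never used there.
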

}

\subsection{Presentations and barcode}
\label{sec:linkedbarcode}

We now introduce the notion of presentation of a module, 
which helps prove important facts (e.g., \Cref{thm:link})
regarding the link barcode and also helps justify
the algorithm for computing the link barcode  (see \Cref{sec:alg}).

\begin{definition}[Presentation of module] 
Let $\M$ be a persistence module of finite type. 
A presentation of $\M$ is an exact sequence of the form 
$
\Ps_1\rightarrowsp{f} \Ps_0\rightarrowsp{} \M\rightarrowsp{} 0,
$
where $\Ps_0$ and $\Ps_1$ are free modules of finite type. The presentation is called \emph{minimal} if 
$\ker f=0$ and 
no two generators 
$g_0$ of $\Ps_0$ and $g_1$ of $\Ps_1$ with the same grade $j$ satisfy that $f_j(g_1)=g_0$,
where $f_j:\Ps_1(j)\to\Ps_0(j)$ is the linear map at index $j$
in $f$.
\label{def:pre_module}
\end{definition}

The following fact is well known, e.g., it can be derived from results in~\cite{ZC05}.
\begin{proposition}
 Let $\Ps_1\rightarrowsp{f} \Ps_0\rightarrowsp{} \M\rightarrow{} 0$ be a presentation of $\M$. Then,
 for every bar $[\lbirth,\ldeath)$ in the barcode of $\M$, $\lbirth$ is the grade of a generator of $P_0$ and
 $\ldeath$ is either $\infty$ or a grade of a generator of $\Ps_1$.
 Conversely, if the presentation is minimal, for every grade $\lbirth$ of a generator of $\Ps_0$, there is a bar $[\lbirth,-)$ in the barcode of $\M$ and for every grade $\ldeath$ of a generator of $\Ps_1$, there is a bar $[-,\ldeath)$ in the barcode of $\M$.
 \label{prop:grades}
\end{proposition}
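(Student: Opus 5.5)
We prove Proposition~\ref{prop:grades} by reducing to a normal form of the presentation matrix, using the structure theory of one-parameter modules of finite type. Choose generating sets $\{g_1,\ldots,g_s\}$ of $\Ps_0$ and $\{h_1,\ldots,h_t\}$ of $\Ps_1$. Then $f$ is encoded by a matrix $A$ over $\field$ whose columns are indexed by the $h_k$ and rows by the $g_i$. The entry $A_{ik}$ may be nonzero only if $\gr(g_i)\le\gr(h_k)$, since $f_{\gr(h_k)}(h_k)$ is a combination of images of the $g_i$ with $\gr(g_i)\le \gr(h_k)$. Order rows and columns by grade.

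Next we apply admissible row and column operations. We may add a multiple of column $k$ to column $k'$ whenever $\gr(h_k)\le\gr(h_{k'})$, and add a multiple of row $i$ to row $i'$ whenever $\gr(g_{i'})\le\gr(g_i)$. Each such operation is a change of generating set of $\Ps_1$ or $\Ps_0$, so it preserves the isomorphism type of $\coker f\cong \M$. Performing persistence-style reduction, with columns processed left to right and pivots chosen at the lowest possible row, brings $A$ to a matching form. In this form every nonzero column has a single $1$ in a distinct row, and every other entry is zero.

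The cokernel is then read off generator by generator. It is a direct sum of
\[
\mathbb{I}^{[\gr(g_i),\gr(h_k))}\quad\text{for each matched pair }(g_i,h_k),
\qquad
\mathbb{I}^{[\gr(g_i),\infty)}\quad\text{for each unmatched }g_i.
\]
A zero column contributes nothing, and when $\gr(g_i)=\gr(h_k)$ the pair contributes the zero module. By uniqueness of the barcode (Krull--Remak--Schmidt), every bar $[\lbirth,\ldeath)$ of $\M$ has $\lbirth=\gr(g_i)$ for some generator of $\Ps_0$, and $\ldeath$ is either $\infty$ or $\gr(h_k)$ for some generator of $\Ps_1$. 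This gives the first claim.

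For the converse, assume the presentation is minimal. Since $\ker f=0$ and the operations are invertible, no column becomes zero, so every $h_k$ is matched. It remains to show that no matched pair has equal grades, since then each generator of $\Ps_0$ and of $\Ps_1$ contributes a genuine bar with the required endpoint. This is the main obstacle: minimality is stated for one particular generating set, while the reduction changes the generating sets.

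I would handle it by working grade by grade through the vector spaces $Q_j=\Ps_0(j)/\sum_{j'<j}\img\Ps_0(j'\le j)$, spanned by the grade-$j$ generators. The map $f_j$ induces $\bar f_j$ from the grade-$j$ generators of $\Ps_1$ into $Q_j$. Admissible operations act on $\bar f_j$ only through invertible changes of basis within grade $j$, possibly plus terms from lower grades that vanish in $Q_j$, so the rank of $\bar f_j$ is an invariant. In the reduced form, the pairs with equal grade $j$ are exactly the pivots of $\bar f_j$, so their number equals the rank of $\bar f_j$. Minimality should force $\bar f_j=0$. I would first try to derive this directly from the condition forbidding $f_j(g_1)=g_0$ for generators of equal grade. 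If that condition is too weak as literally stated, I would appeal to the standard meaning of minimality, namely $\img f\subseteq \mathfrak{m}\Ps_0$, which gives $\bar f_j=0$ immediately. In either case there are no equal-grade pairs, which completes the converse.
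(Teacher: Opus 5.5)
The paper gives no proof of its own; it calls the fact well known and defers to \cite{ZC05}. Your route is precisely that derivation: bring the graded presentation matrix to a matching form by admissible operations, read off the cokernel, and invoke uniqueness of the barcode. Your first half is in effect Proposition~\ref{prop:reduce} applied to $[f]$. The first claim, the use of $\ker f=0$ to match every generator of $\Ps_1$, and the invariance of the rank of $\bar f_j$ are all fine. (The matching form needs the upward row operations, not only left-to-right column reduction, but you allow those.) The genuine gap is the step on which the whole converse rests. You never show that the paper's minimality condition forces $\bar f_j=0$; instead you propose to replace it by a different hypothesis if it seems too weak. As written, that is not a proof of the statement as posed.

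The step closes in a few lines once \emph{generator} is read as the paper defines it, namely relative to an arbitrary interval decomposition of the free module. Put $L_j=\sum_{j'<j}\img\Ps_0(j'\leq j)$. By naturality of $f$, elements of $\Ps_1(j)$ coming from lower grades are sent into $L_j$. So if $\bar f_j\neq 0$, there is a grade-$j$ generator $h$ of $\Ps_1$ with $x:=f_j(h)\notin L_j$. Expanding $x$ in a generating set of $\Ps_0$, some generator $g$ of grade exactly $j$ occurs with nonzero coefficient. Replacing $g$ by $x$ is an invertible, grade-preserving change of generating set. Hence $x$ is itself a generator of $\Ps_0$ of grade $j$ with $f_j(h)=x$, contradicting minimality. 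Conversely, $\bar f_j=0$ for all $j$ implies the paper's condition, since a grade-$j$ generator of $\Ps_0$ never lies in $L_j$; so under this reading the two notions you mention coincide.

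Your worry is justified only under the other reading, with one fixed generating set, and there the converse is in fact false. Take $\Ps_1=\mathbb{I}^{[0,\infty)}$ generated by $h$, $\Ps_0=\mathbb{I}^{[0,\infty)}\oplus\mathbb{I}^{[0,\infty)}$ generated by $g_1,g_2$, and $f(h)=g_1+g_2$. Then $f$ is injective and $f_0(h)\notin\{g_1,g_2\}$, yet $\coker f\cong\mathbb{I}^{[0,\infty)}$ has no bar ending at the grade $0$ of $h$. So you should commit to the all-generators reading and include the argument above.
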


\begin{definition}[Presentation matrix]
    Let $\M$ have a presentation as in Definition~\ref{def:pre_module} and let
    $\{g_1,\ldots, g_s\}$ and $\{h_1,\ldots,h_r\}$ be some generating sets
    for $\Ps_0$ and $\Ps_1$ respectively ordered by non-decreasing grades.
    The 
    morphism $f$ can be represented with a \emph{presentation matrix} 
    $[f]=(\alpha_{ij})_{s\times r}$,
    with $j$-th column and $i$-th row \emph{graded} with $\gr(h_j)$ and
    $\gr(g_i)$ respectively.
    Moreover, if $\Ps_0, \Ps_1\not= 0$,
    each column $(\alpha_{1j},\ldots,\alpha_{sj})^\tran$ of $[f]$ satisfies the following:
    Let $t$ be the greatest index $i$ among $1,\ldots,s$ with $\gr(g_i)\leq \gr(h_j)$. Then
    \[{f_{\gr(h_j)}(h_j)=\sum_{i=1}^t \alpha_{ij}\cdot\Ps_0(\gr(g_i)\leq\gr(h_j))(g_i).} \]
    Moreover, for $t<i\leq s$, we have $\alpha_{ij}=0$.
    We also call $[f]$ a \emph{graded matrix} of $\M$.
    Notice that
    when $\Ps_1=0$ $(\Ps_0=0)$, the matrix $[f]$ degenerates into an empty column (row) matrix that has
    only rows (columns) for every
    generator of $\Ps_0$ $(\Ps_1)$. This helps in interpreting this degenerate case
    in algorithms that we design later.
    \label{def:presentmatrix}
\end{definition}

The barcode of a module $\M$ can be computed from 
a graded matrix $[f]$ of $\M$. For this, as shown in~\cite{ZC05},
one simply needs to \emph{reduce} $[f]$ by left-to-right
column reductions as done for the well known persistence algorithm~\cite{EH10}. 
For a non-zero column $j$
of a graded matrix $A$,
let $\low_A(j)$ denote the row grade of the lowest non-zero entry in the column.
Essentially, this reduction produces a (reduced) matrix $R$ with unique 
$\low_R(j)$ for each non-zero column $j$ s.t.\ $R=[f]V$ for an upper triangular matrix $V$. 
Notice that when $\Ps_1=0$ or
    $\Ps_0=0$,
    $R$ is also taken to be a degenerate matrix.
The
following result from~\cite{ZC05} is well known. 
\begin{proposition}
    Let $\Ps_1\rightarrowsp{f} \Ps_0\rightarrowsp{} \mathsf{M}\rightarrow{} 0$ be a presentation (not necessarily minimal) of $\M$ and
    let $R=[f]V$ be a reduced matrix of the graded matrix $[f]$. Then:
    \begin{itemize}
        \item For each non-zero column $j$ of $R$, the pair $[\low_R(j),\gr(j))$ is a bar in the barcode of $\M$, where $\gr(j)$ is the grade of column $j$.
        \item For each row $i$ of $R$ where $\gr(i)\not = \low_R(j)$ for any non-zero column $j$, $[\gr(i),\infty)$ is
        a bar in the barcode of $\M$.
    \end{itemize}
    \label{prop:reduce}
\end{proposition}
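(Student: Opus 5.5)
The plan is to reduce the statement to the standard structure theorem for presentations of one-parameter modules. I would first bring $[f]$ into a normal form by invertible graded changes of basis of $\Ps_0$ and $\Ps_1$. From that normal form the barcode of $\M=\coker f$ can be read off directly.

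\textbf{Step 1: interpret the column reduction.} The left-to-right column additions producing $R=[f]V$ only add a column $k$ to a later column $j>k$, and by ordering $\gr(k)\le\gr(j)$. Hence $V$ is an invertible, upper-triangular, grade-respecting matrix. It therefore represents an automorphism of $\Ps_1$: the new generators $h'_j=\sum_k V_{kj}\,\Ps_1(\gr(h_k)\le\gr(h_j))(h_k)$ still form a generating set with the same grades. The image of $f$ is unchanged, so $\coker f\cong\M$ and $R$ is a graded matrix of $\M$ for the same $\Ps_0$.

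\textbf{Step 2: clear the entries above each pivot.} For each non-zero column $j$ with pivot row $i_j$, where $\gr(i_j)=\low_R(j)\le\gr(j)$, I use row operations that add a multiple of row $i_j$ to a row $i<i_j$. This corresponds to replacing the generator $g_{i_j}$ by $g_{i_j}+\sum_{i<i_j}\lambda_i\,\Ps_0(\gr(g_i)\le\gr(g_{i_j}))(g_i)$. That is a legitimate graded change of generators of $\Ps_0$, because earlier rows have grade at most $\gr(g_{i_j})$. Processing pivots from the bottom up, each such row operation only affects entries above the pivot of the column being treated, and never changes lowest entries. After doing this and also clearing the remaining entries to the left of pivots in the pivot rows, I expect $R$ to become a matching matrix. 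In it each non-zero column $j$ has a single non-zero entry, located in row $i_j$, and the pivot rows are distinct by reducedness. I expect this to be the main obstacle. The delicate point is to perform the clearing in an order that keeps every operation grade-compatible, and to check that earlier clearings are never undone. I would argue by processing columns in decreasing order of pivot row.

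\textbf{Step 3: read off the decomposition.} With a matching matrix, $\M$ splits as a direct sum over rows of $\Ps_0$. If row $i$ is the pivot of column $j$, its summand is the quotient of $\mathbb{I}^{[\gr(i),\infty)}$ by the image of $\mathbb{I}^{[\gr(j),\infty)}$, which is $\mathbb{I}^{[\gr(i),\gr(j))}$. This summand is trivial when the two grades coincide, consistent with the empty interval. Zero columns contribute nothing, since they lie in $\ker f$. Every unmatched row $i$ contributes $\mathbb{I}^{[\gr(i),\infty)}$. By uniqueness of the barcode (Krull–Remak–Schmidt), these intervals form the barcode of $\M$. This gives both bullet points, with $\low_R(j)$ recorded as the grade of the pivot row.
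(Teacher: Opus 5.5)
Your proof is correct. The paper gives no argument of its own and only cites \cite{ZC05}, and your route is the ``echelon form is sufficient'' argument of that reference rephrased for free persistence modules: left-to-right column additions act as a grade-preserving automorphism of $\Ps_1$, upward row additions from pivot rows act as a grade-preserving automorphism of $\Ps_0$, and the cokernel is read off the resulting matching normal form. The step you flag as delicate works as you propose: processing pivots in decreasing row order, row $i_j$ is already zero outside column $j$ when you use it, so no earlier clearing is undone and entries below pivots stay zero. The pivot rows are then automatically clean, so no extra clearing is needed; doing it by adding column $j$ to earlier columns would not be grade-compatible anyway.
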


\begin{theorem}
    For $p>0$, let $[\lbirth,\ldeath)$ be a $p$-link bar in the barcode of the $p$-link module $\Lsf_p$.
    Then, there is a bar $[-,\lbirth)$ in the barcode of the homology module $\Hm_{p-1}$ and there is a bar $[\ldeath,-)$ in the barcode of the homology
    module $\Hm_p$ if $\ldeath\not=\infty$.
    \label{thm:link}
\end{theorem}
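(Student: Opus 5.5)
The plan is to read the link bars off a presentation of $\Lsf_p$ and identify its generators and relations with simplices of the filtration. Since $\Lsf_p=\coker(\phi)$, the sequence
\[
\Z_p\xrightarrow{\;\phi\;}\Y_p\rightarrow\Lsf_p\rightarrow 0
\]
is exact. By Proposition~\ref{prop:basic}, both $\Z_p$ and $\Y_p$ are free, so this is a presentation of $\Lsf_p$. Proposition~\ref{prop:grades} then says that every bar $[\lbirth,\ldeath)$ of $\Lsf_p$ has $\lbirth$ equal to the grade of a generator of $\Y_p$, and $\ldeath$ either $\infty$ or the grade of a generator of $\Z_p$. It therefore suffices to identify these grades with endpoints of homology bars.

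\textbf{Generators of $\Y_p$.} The module $\Y_p$ has the generating set $\{\sigma_i : \sigma_i \text{ a negative } p\text{-simplex}\}$, and each $\sigma_i$ has grade $i$. This holds because $Y_p(K_j)$ is spanned by the negative $p$-simplices of $K_j$ and the maps are inclusions, so each negative simplex spans an interval summand $[i,\infty)$. A negative $p$-simplex $\sigma_i$ kills a $(p-1)$-dimensional class, since negativity means $\dim H$ drops and $\partial\sigma_i$ is a $(p-1)$-cycle. By the standard pairing, $\Hm_{p-1}$ then has a bar $[-,i)$. Hence $\lbirth$ is the right endpoint of a bar of $\Hm_{p-1}$.

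\textbf{Generators of $\Z_p$.} The generators of $\Z_p$ are the cycles $z_\sigma$ from the proof of Proposition~\ref{prop:point-split}, one for each positive $p$-simplex $\sigma=\sigma_j$, with grade $j$. One checks that $\{z_{\sigma_j}\}$ restricted to $K_i$ is a basis of $Z_p(K_i)$ compatible with the inclusions, so it is a generating set. A positive $p$-simplex $\sigma_j$ creates a $p$-dimensional class, so $\Hm_p$ has a bar $[j,-)$. Hence a finite $\ldeath$ is the left endpoint of a bar of $\Hm_p$.

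The main obstacle is making these identifications rigorous. The potential difficulty is that the pointwise generating sets must actually form decompositions into interval submodules in the sense of the definition of generators. For $\Y_p$ this is immediate. For $\Z_p$ I would argue directly: $z_{\sigma_j}$ is supported on simplices with index at most $j$ and has $\sigma_j$ as its last positive simplex. Triangularity with respect to the positive simplices then gives linear independence, and the count equals $\dim Z_p(K_i)=|\pset_p(K_i)|$, which is the isomorphism in Proposition~\ref{prop:point-split}. This yields $\Z_p=\bigoplus_j \mathbb{I}^{[j,\infty)}$ with the $z_{\sigma_j}$ as generators. The remaining fact, that positive and negative simplices correspond to births and deaths of homology bars in the appropriate dimension, is the standard persistence pairing from~\cite{EH10}, and I would cite it.

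Note that only the first half of Proposition~\ref{prop:grades} is needed, so minimality of the presentation plays no role.
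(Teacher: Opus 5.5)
Your proof is correct. Its first step matches the paper's: both apply the first half of Proposition~\ref{prop:grades} to the presentation $\Z_p\xrightarrow{\phi}\Y_p\to\Lsf_p\to 0$, which is generally not minimal. This reduces the claim to two facts: generator grades of $\Y_p$ are death times in $\Hm_{p-1}$, and generator grades of $\Z_p$ are birth times in $\Hm_p$.

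You differ from the paper in how you establish these facts. The paper stays algebraic. It observes that $\Y_{p+1}\xrightarrow{\partial_{p+1}}\Z_p\to\Hm_p\to 0$ is a \emph{minimal} presentation of $\Hm_p$, for two reasons:
\begin{itemize}
\item $\partial_{p+1}$ is injective on $\Y_{p+1}$, since $\C_{p+1}=\Z_{p+1}\oplus\Y_{p+1}$;
\item no grades coincide, since the filtration is simplex-wise.
\end{itemize}
It then invokes the converse half of Proposition~\ref{prop:grades} for $\Hm_p$ and for $\Hm_{p-1}$. This shows $\Y_p$ and $\Z_p$ to be, respectively, the relation module of $\Hm_{p-1}$ and the generator module of $\Hm_p$. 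That is the conceptual reason the link module ``weaves'' consecutive homology dimensions.

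You instead work at the level of simplices. Generator grades of $\Y_p$ are indices of negative $p$-simplices, and those of $\Z_p$ are indices of positive $p$-simplices. At such an index, the structure map of $\Hm_{p-1}$ (resp.\ $\Hm_p$) has one-dimensional kernel (resp.\ cokernel), so exactly one bar ends (resp.\ starts) there. This is more elementary and needs neither the minimality check nor the converse half of Proposition~\ref{prop:grades}. The cost is that it steps outside the presentation framework the paper builds on for its algorithm.

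Your concern about producing an honest interval decomposition of $\Z_p$ is unnecessary. The multiset of generator grades of a free module is its barcode, so it does not depend on the chosen decomposition. It can be read off from the indices where $\dim Z_p(K_i)$ jumps, namely the positive $p$-simplices. Your triangularity argument is correct, but you can drop it.
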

\begin{proof}
    First, we derive a minimal presentation of the homology modules.
    It is well known that the following is a presentation of the homology module $\Hm_p$:
    \begin{equation*}
    \C_{p+1}\rightarrowsp{\partial_{p+1}}\Z_p\rightarrowsp{}\Hm_p\rightarrowsp{} 0
    \end{equation*}
    where $\partial_{p+1}$ denotes the point-wise boundary morphism $\partial_{p+1}(K_i): C_{p+1}(K_i)\rightarrowsp{} C_p(K_i)$. This presentation, however, is not minimal because
    $\ker\, \partial_{p+1}=\Z_{p+1}$ may not be zero. From Proposition~\ref{prop:basic}, we have $\C_{p+1}=\Z_{p+1}\bigoplus \Y_{p+1}$,
    and hence $\partial_{p+1}(Y_{p+1}(K_i))=\partial_{p+1}(C_{p+1}(K_i))$
    for each $i$. We then have the following presentation
    \begin{equation}
    \Y_{p+1}\rightarrowsp{\partial_{p+1}}\Z_p\rightarrowsp{}\Hm_p\rightarrowsp{} 0
    \label{eq:minpresent}
    \end{equation}
    where $\ker\, \partial_{p+1}=0$. Furthermore, since the input filtration $\Fcal$ is simplex-wise,  no
    generator of $\C_{p+1}$ and hence of $\Y_{p+1}$ can coincide with any generator of
    $\Z_p$ in grades. It follows that the presentation in Eq.~\eqref{eq:minpresent} is minimal.
    
    Next, consider the presentation of the module $\Lsf_p$:   
    \begin{equation}
        \Z_p\rightarrowsp{\phi}\Y_p \rightarrowsp{} \Lsf_p\rightarrowsp{} 0.
        \label{eq:linkpresent}
    \end{equation}
    Applying Proposition~\ref{prop:grades}, we conclude that {every bar $[\lbirth,\ldeath)$ in the barcode of $\Lsf_p$ with
    $\ldeath\not=\infty$ has $\ldeath$ as the grade of a generator for $\Z_p$}.
    Then,
    since Eq.~\eqref{eq:minpresent} is a minimal presentation of the module $\Hm_p$, there is a bar $[\ldeath,-)$ in the barcode of $\Hm_p$ according to Proposition~\ref{prop:grades}. 

    Also, according to Proposition~\ref{prop:grades}, the grade $\lbirth$ coincides with the grade
    of a generator for $\Y_p$. Then, again applying Proposition~\ref{prop:grades} to the minimal presentation in Eq.~\eqref{eq:minpresent}
    adapted to $\Hm_{p-1}$,
    we get that there is a bar $[-,\lbirth)$ in the barcode of $\Hm_{p-1}$.
\end{proof}

\begin{remark}
A link bar is the lifespan of an evolving chain during which it remains 
as a non-cycle. 
Equivalently, it measures the time gap between the destruction of a 
$(p-1)$-cycle when the chain first appears and the creation of a $p$-cycle that 
the evolving chain grows into. The linked barcode records a coupling between standard homology barcodes in successive dimensions. For example, consider the following two filtrations with four vertices. First, the vertices $a,b,c,d$ appear, in that order, followed by: 
\begin{align*}
    \mathcal{F}_1 :  e_1=(a,b), e_2=(b,c), e_3=(c,d), e_4=(b,d), e_5=(a,c)\\
    \mathcal{F}_2: e_1=(a,b), e_2=(b,c), e_3=(c,d), e_4=(a,c), e_5=(b,d) 
\end{align*}
Both filtrations, $\mathcal{F}_1$, and $\mathcal{F}_2$, have the same persistence barcodes. However, in the filtration $\mathcal{F}_1$, the cycle $e_1+e_2+e_5$ takes longer to form. This corresponds to the bar $(e_2,e_5)$ in the linked barcode of $\mathcal F_1$. In contrast, the same cycle corresponds to a shorter bar $(e_2,e_4)$ in the linked barcode of $\mathcal{F}_2$; this shift is captured by the linked barcode.
\end{remark}
\cancel{
\subsubsection{Linked barcode} 
\label{sec:linkedbar}
The barcodes for homology modules $\Hm_{p-1}$ and $\Hm_p$ can be connected with the help of
$p$-link barcodes for all $p>0$ to generate an elongated barcode we call the \emph{linked
barcode} for the input filtration $\Fcal$.
Let 
\[x[p,q]=(x_1^{p},x_2^{p},x_1^{p+1},x_2^{p+1}\ldots,x_1^{q},x_2^{q})\]
be an
ordered sequence where $[x_1^{p+i},x_2^{p+i})$ is a bar in the barcode of $\Hm_{p+i}$ for $0\leq i\leq q-p$. We say
$x[p,q]$ is a \emph{linked bar} given that the following are true: (i) there is no bar $[-,x_1^{p})$ in the barcode of $\Lsf_p$;
(ii) there is a $(p+i)$-link bar $[x_2^{p+i},x_1^{p+i+1})$ in the barcode of $\Lsf_{p+i}$ for $0\leq i\leq (q-p)$;
(iii) $[x_1^q,x_2^q)$ is either an infinite bar of $\Hm_q$, i.e., $x_2^q=\infty$, or
$[x_1^q,x_2^q)$ is a finite bar and
there is an infinite bar $[x_2^q,\infty)$ in the barcode of $\Lsf_{q}$.
The collection of
all linked bars constitutes what we call the \emph{linked barcode} \gilberto{should we call it link barcode or linked barcodeç? maybe even linking barcode, I don't think is a linked barcode but a linking barcode, naming it link barcode may be better, we need to fix the consistency through the paper}  of $\Fcal$. Figure~\ref{fig:linkedbar} shows five linked bars in the linked barcode of the illustrated filtration.}

\section{Algorithm}\label{sec:alg}
Theorem~\ref{thm:link} characterizes the $p$-link bars 
as links between the bars of the barcodes of $\Hm_{p-1}$ and $\Hm_p$. Now, we 
present an algorithm to compute these $p$-link bars. 
As Proposition~\ref{prop:reduce} suggests, the barcode of $\Lsf_p$ can then 
be computed by reducing the matrix $[\phi]$ obtained
from its presentation in Eq.~\eqref{eq:linkpresent}. Therefore, our main task is to compute
the matrix $[\phi]$ from an input filtration $\Fcal$ as in Eq.~\eqref{eq:filtration}.

To compute $[\phi]$, recall the following factoring of $\phi$
as in \Cref{sec:persmod}:
\[\phi: \Z_p\rightarrowsp{\chi} \X_p\oplus\Y_p \rightarrowsp{\pi} \Y_p.\] 
Specifically, the morphisms
are defined point-wise as: $\chi_i(z)=(x,y)$ and $\pi_i(x,y)=y$ where
$z=x+y$ for $z\in Z_p(K_i)$, $x\in X_p(K_i)$, and $y\in Y_p(K_i)$.
Therefore, we only need to compute the
matrices $[\chi]$, $[\pi]$ and then reduce $[\phi]=[\pi][\chi]$ to get the $p$-link barcode of $\Lsf_p$.

From a simplex-wise filtration $\Fcal$, one can derive
a 
graded boundary matrix
$[\partial_p]$
with bases being \emph{elementary} chains each consisting of a single simplex.
Moreover,
the columns and rows of $[\partial_p]$ are graded 
with the grades of the generators 
they represent, i.e., the column (row resp.) 
corresponding to $\sigma_j$ has grade $j$. 
Let $\col_A(j)$ ($\row_A(j)$ resp.)\ denote the column (row resp.)\ with grade $j$ in a matrix $A$,
we have that $\col_{[\partial_p]}(j)$ has $\alpha_{ij}$ in its row with grade $i$ if $\sigma_i$ appears
with coefficient $\alpha_{ij}\in \field$ in $\partial_p \sigma_j$.
Suppose we reduce $[\partial_p]$ with left-to-right column reductions (see e.g. \cite{EH10})
to obtain the reduced matrix
$R=[\partial_p]V$,  where $V$ is a (square) upper triangular 
matrix whose rows and columns have the grades of the columns of $[\partial_p]$.
The following fact is well known, see \cite[Corollary 4.1]{ZC05}.
\begin{fact}
    The column $\col_V(j)$ represents a $p$-cycle $z$ in $\Z_p$ generated at 
    grade $j$ if
    $\col_R(j)=0$. 
    Conversely,
    $\col_V(j)$ represents a $p$-chain
    $y$ in $\Y_p$ generated at grade $j$ 
    if $\col_R(j)\neq 0$.
    Furthermore, the set of $p$-cycles $z_1,\ldots, z_{s}$ and 
    the set of $p$-chains $y_1,\ldots, y_{r}$, represented by all columns of $V$,
    constitute a basis for $\Z_p$ and $\Y_p$ respectively. 
\end{fact}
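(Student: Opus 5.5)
We prove the Fact by following the standard reduction argument. Write $R=[\partial_p]V$ with $V$ upper triangular and with every diagonal entry nonzero; left-to-right column additions only add multiples of earlier columns to later ones. Hence $\col_V(j)$ is a $p$-chain of $K_j$ whose last simplex, in filtration order, is $\sigma_j$, with nonzero coefficient. In particular, $\col_V(j)\in C_p(K_j)$, so it lives in $\C_p$ at grade $j$. Since $R=[\partial_p]V$, we have $\partial_p\,\col_V(j)=\col_R(j)$. So whenever $\col_R(j)=0$, the chain $z_j:=\col_V(j)$ is a $p$-cycle of $K_j$ and lies in $Z_p(K_j)$.

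Next I would identify the zero columns with positive simplices. The standard persistence theory of \cite{EH10,ZC05} gives two facts. First, $\sigma_j$ is positive in the sense of Definition~\ref{dfn:posneg} iff $\col_R(j)=0$. Second, the nonzero columns have distinct lows, and each low is the index of a positive $(p-1)$-simplex. Indeed, adding $\sigma_j$ raises $\dim H$ exactly when $\partial\sigma_j$ already lies in $B_{p-1}(K_{j-1})$, which is exactly when the reduced column vanishes. The triangular structure then shows that for each $i$ the cycles $\{z_j : j\le i,\ \col_R(j)=0\}$ form a basis of $Z_p(K_i)$. They are independent because their last simplices are distinct. Their number equals the number of positive $p$-simplices in $K_i$, which is $\dim Z_p(K_i)$ by Proposition~\ref{prop:point-split}. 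Since they are nested in $i$, they form a generating set of the free module $\Z_p$ with grades $j$. This proves the first claim.

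For the converse, take a column with $\col_R(j)\neq 0$. Then $\sigma_j$ is negative, and the chain $y_j:=\col_V(j)$ has last simplex $\sigma_j$. However, $y_j$ may involve earlier positive simplices, so it need not lie in $Y_p(K_j)$ as literally defined. This is the main obstacle: I would interpret ``represents'' modulo $\Z_p$, or modify $y_j$. The route I would take is to use $\C_p=\Z_p\oplus\Y_p$ from Proposition~\ref{prop:basic} and its projection $\C_p\to\Y_p$ along $\Z_p$. Under this projection, the chains $y_j$ map to elements $\bar y_j\in Y_p(K_j)$ that still have last simplex $\sigma_j$. Indeed, the projection subtracts a cycle, and the cycle lies in the span of the $z_k$, whose last simplices are all positive; so the coefficient on the negative $\sigma_j$ is unchanged. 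Then the $\bar y_j$ for negative $j\le i$ are $\dim Y_p(K_i)$ many independent elements, by their distinct last simplices. They therefore form a basis of $Y_p(K_i)$, compatible across $i$, and hence a generating set of $\Y_p$ with grade $j$. Equivalently, the $y_j$ together with the $z_j$ form a basis of $\C_p$ adapted to the splitting.

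Finally, I would combine the two parts. All columns of $V$ together form a basis of each $C_p(K_i)$ by triangularity. This basis splits into a basis of $Z_p(K_i)$ and a complement of $Z_p(K_i)$ that is identified with $Y_p(K_i)$ via the projection. This matches the claimed statement that the $z$'s and $y$'s give bases of $\Z_p$ and $\Y_p$ respectively.
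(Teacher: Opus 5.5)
\textbf{Assessment.} Your cycle part is correct. It is the standard argument behind \cite[Corollary 4.1]{ZC05}, which is all the paper gives as proof. The argument is: $\partial_p\col_V(j)=\col_R(j)$; the zero columns of $R$ are exactly the positive simplices; and triangularity together with $\dim Z_p(K_i)=|\pset_p(K_i)|$ gives a nested basis, hence a generating set of $\Z_p$.

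\textbf{The gap is in the $\Y$ part.} The Fact asserts that when $\col_R(j)\neq 0$, the column $\col_V(j)$ itself lies in $Y_p(K_j)$, i.e.\ is supported on negative simplices only. You do not prove this. Instead you suggest it may fail and replace $y_j$ by its projection $\bar y_j$ along $\Z_p$. What you then establish is that the $\bar y_j$ form a nested basis of $\Y_p$, or equivalently that the $\Y$-columns give a basis of $\C_p/\Z_p$. That is a different statement. Your closing claim that the $y_j$ and $z_j$ form a basis ``adapted to the splitting'' $\C_p=\Z_p\oplus\Y_p$ does not follow from it.

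\textbf{The missing observation.} The obstacle never arises for the reduction used here. In the left-to-right reduction, as in {\sc LinkedBar}, column $i$ is added to column $j>i$ only to cancel a common low. Column $i$ is already final when $j$ is processed, so it is nonzero in $R$, i.e.\ $\sigma_i$ is a negative $p$-simplex. Hence $\col_V(j)=\sigma_j+\sum_i c_i\,\col_V(i)$, with the sum over negative $\sigma_i$, $i<j$. By induction, every column of $V$ is supported on $\sigma_j$ together with negative simplices preceding it.

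For negative $\sigma_j$, this puts $\col_V(j)$ literally in $Y_p(K_j)$, with last simplex $\sigma_j$. Your own triangularity-and-count argument, now with $\dim Y_p(K_i)=|\sset_p(K_i)\setminus\pset_p(K_i)|$, then shows these columns form a nested basis, hence a generating set of $\Y_p$. No projection is needed.

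\textbf{When your caution is warranted.} It applies only to arbitrary left-to-right additions. Adding a zero column of $R$ (a cycle $z_i$) to a later negative column keeps $R$ reduced and $V$ upper triangular. But it puts the positive simplex $\sigma_i$ into $\col_V(j)$, and then the Fact as stated is false. So the proof must invoke the specific reduction rule, rather than reinterpret ``represents.''
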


We call the columns of $V$ representing a basis of $\Z_p$
and $\Y_p$
the \emph{$\Z$-columns} and \emph{$\Y$-columns} respectively.
The rows of $V$ similarly
get partitioned into two sets, \emph{$\X$-rows} and \emph{$\Y$-rows}, whose grades match
 those of the $\Z$-columns and $\Y$-columns respectively. 
Effectively, we get a basis $\{\sigma_{i_1},\ldots,\sigma_{i_s}\}$
for $\X_p$, 
where $i_1,\ldots, i_s$ are the grades of $\X$-rows of $V$.
Similarly, we get a basis for $\Y_p$ from the $\Y$-rows of $V$.
Then, the submatrix
$V^\chi$ of $V$ consisting of $\Z$-columns and all rows represents the morphism 
$\Z_p\rightarrowsp{\chi} \X_p\oplus \Y_p$. Let $I^\pi$
be a $r\times (r+s)$
truncated identity matrix where $V$ has
{$s$} $\X$-rows and {$r$} $\Y$-rows. 
The matrix $I^\pi$
 represents the projection
$\X_p\oplus \Y_p\rightarrowsp{\pi}\Y_p$. 

The product matrix $V'=I^\pi V^\chi$ therefore
represents the morphism $\Z_p\rightarrowsp{\phi=\pi\circ\chi} \Y_p$.
However, for computational efficiency, we do not explicitly multiply the
two matrices and instead obtain $V'$
by
simply deleting the $\X$-rows and $\Y$-columns of $V$.  In doing so, 
rows and columns of $V'$ inherit the grades of those of $V$.
Also notice that
if the column or row space of
$V'$ become zero, then it degenerates to an empty column or row matrix
as described in Definition~\ref{def:presentmatrix}. 
We then reduce $V'$ to obtain the barcode of $\Lsf_p$.
Specifically, let $R'$ be the reduced version of $V'$.
Applying Proposition~\ref{prop:reduce}, we get:
\begin{proposition}\label{prop:link-reduce}
    (i) For each non-zero column $j$ of $R'$,
    $[\low_{R'}(j),\gr(j))$ is a finite bar in the barcode of $\Lsf_p:=\coker(\pi\circ\chi)$.
    (ii) For each row $i$ of $R'$ where $\gr(i)\not = \low_{R'}(j)$ for any non-zero column $j$, $[\gr(i),\infty)$ is
        an infinite bar in the barcode of $\Lsf_p$.
\label{prop:bars}
\end{proposition}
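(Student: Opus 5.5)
The plan is to reduce the statement to Proposition~\ref{prop:reduce}, applied to the presentation
\[\Z_p\rightarrowsp{\phi}\Y_p\rightarrowsp{}\Lsf_p\rightarrowsp{}0\]
of Eq.~\eqref{eq:linkpresent}. That proposition gives the claimed bars once we show that $V'$ is a valid presentation matrix $[\phi]$ in the sense of Definition~\ref{def:presentmatrix}.

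\textbf{Step 1: bases.} The $\Z$-columns of $V$ form a generating set of the free module $\Z_p$: by the Fact, the column with grade $j$ is a cycle $z_j$ born at grade $j$, and these cycles form a basis compatible with grades. For the target $\Y_p$, I will use the elementary chains $\sigma_i$ with $\sigma_i$ negative as the generating set. These have grade $i$, and they are exactly the chains indexed by the $\Y$-rows. This works because, by Definition~\ref{dfn:posneg}, $Y_p(K_j)$ is spanned by the negative simplices present in $K_j$. Thus $\Y_p$ is a direct sum of interval modules $\mathbb{I}^{[i,\infty)}$, one generated by each negative $\sigma_i$.

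\textbf{Step 2: column expansion.} Write $\col_V(j)=\sum_i v_{ij}\sigma_i$ for a $\Z$-column $j$. Since $V$ is upper triangular, only rows with $i\le j$ are nonzero. The unique splitting $z_j=x+y$ from Proposition~\ref{prop:point-split} is obtained simply by separating the positive simplices (the $\X$-rows) from the negative ones (the $\Y$-rows). Hence
\[\phi_j(z_j)=\sum_{i\in\Y\text{-rows},\, i\le j} v_{ij}\,\Y_p(i\le j)(\sigma_i).\]
This is precisely the column of $V'$ with grade $j$, and every entry below the last row of grade $\le j$ is zero. So $V'=I^\pi V^\chi$ satisfies the column condition of Definition~\ref{def:presentmatrix}, including the degenerate cases, where it is an empty row or column matrix.

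\textbf{Step 3: read off the bars.} Applying Proposition~\ref{prop:reduce} to the reduction $R'=V'W$ gives both conclusions. Part~(i): each nonzero column $j$ yields the bar $[\low_{R'}(j),\gr(j))$. This bar is finite because $\gr(j)$ is a finite grade. Part~(ii): each unpaired row $i$ yields the bar $[\gr(i),\infty)$.

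The main obstacle I expect is Step~2, in particular justifying that deleting the $\X$-rows computes $\pi\circ\chi$ in the correct bases. The subtle point is that the rows of $V$ index elementary simplices, not the cycle basis. So I must check that $X_p$ and $Y_p$ are defined as spans of elementary simplices, which Definition~\ref{dfn:posneg} guarantees. Then the projection $\pi$ is literally coordinate deletion. Once this is settled, the rest is a direct citation.
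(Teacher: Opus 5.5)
Your proposal is correct and follows the paper's own argument. The paper likewise takes $V'=I^\pi V^\chi$, obtained by deleting the $\X$-rows and $\Y$-columns of $V$, as the presentation matrix of $\phi$ for $\Z_p\to\Y_p\to\Lsf_p\to 0$ (with the $\Z$-columns generating $\Z_p$ and the elementary negative simplices generating $\Y_p$), and then reads off the bars via Proposition~\ref{prop:reduce}; your Step~2 just makes explicit the graded column condition of Definition~\ref{def:presentmatrix} that the paper leaves implicit.
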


\begin{algorithm}[!h]\label{alg:link}
    \caption{{\sc{LinkedBar}($\Fcal$)}}
    \KwData{An $m\times m$ matrix $D$ of a graded boundary matrix computed from $\Fcal$} 
    \KwResult{Links of $\Fcal$}
    Initialize $V$ as an $m\times m$ identity matrix\;
    \For{$j:=1$ to $m$}{ 
        \While{$\exists i<j \mbox{ so that } \low_D(j)=\low_D(i)$}{
             $\col_D(j):=\col_D(j)+\col_D(i)$;
            $\col_V(j):=\col_V(j)+\col_V(i)$\;
        }
    }
    \For{every column $j$ of $V$ with $\col_D(j)$ non-zero}{
    $V:= V\setminus \col_V(j)$\;
    }
    \For{every row $i$ of $V$ with $\col_D(i)$ being zero}{
    $V:= V\setminus \row_V(i)$;  
    }
    Rename $V$ as $V'$\;
    \For{$j:=1$ to $\#\col_{V'}$}{ 
        \While{$\exists i<j \mbox{ so that } \low_{V'}(j)=\low_{V'}(i)$}{
             $\col_{V'}(j):=\col_{V'}(j)+\col_{V'}(i)$;
        }
    }
\Return{barcode from reduced $V'$ according to Proposition~\ref{prop:bars}}
\end{algorithm}

For computation, we can consider all dimensions of homology together. Thus,
instead of considering the boundary matrix $[\partial_p]$ separately for each $p>0$,
we consider the total boundary matrix $D=[\oplus_p\partial_p]$.
Given an input filtration $\Fcal$ as in~\eqref{eq:filtration},
the algorithm {\sc LinkedBar}
computes the $p$-link barcode for all $0< p \leq k$,
where $k$ is the dimension of the complex $K:=K_m$
and is assumed to be non-zero. Although the algorithm works 
with arbitrary field $\Bbbk$, we write the algorithm assuming $\Bbbk=\mathbb{Z}_2$ for
simplicity.
We call the set of $p$-link barcodes for $0< p \leq k$ the \emph{links} of $\Fcal$.
It is evident that the algorithm takes $O(m^3)$ time 
with the running time
dominated by left-to-right
column reductions of the two $m\times m$ matrices.


\section{Algorithm for graph filtration}
Our next goal is to design a near-linear time algorithm for the case when $\Fcal$ is a 
filtration of a graph. It is known that the standard persistence barcode for a graph filtration
can be computed in $O(n\alpha(n))$ time where $\alpha(n)$ is the extremely slowly growing inverse Ackermann's function. Even for the zigzag persistence, graphs admit better algorithm with
time complexity of $O(n\log n)$~\cite{DHP23}. We show that we can compute the linked barcode 
for a graph filtration in $O(n\log n)$ time. To prepare for the presentation of this algorithm,
we first develop a slightly different formulation of the linked barcode in the section to follow.

Assume that we have a filtration $\Fcal: G_0\inctosp{}G_1\inctosp{}\cdots\inctosp{} G_m$ of a graph $G=G_m$. Each vertex and edge in $G$ is
assigned a weight equal to its index in $\Fcal$ which is also its grade as a generator
in the graded chain module. 
\cancel{
By proposition~\ref{prop:linkedbar},
we have $\Lsf_1=\widehat{\Lsf}_1\oplus \U_1$. 
By Eq.~\eqref{eq:linksplit} and Proposition~\ref{prop:grades}, every bar $[a,b)$ in the barcode of $\Lsf_1$ has
$a$ coinciding with the grade of a generator in $\Y_1=\W_1\oplus\U_1$ and $b$ coinciding
with a generator in $\Z_1$ if $b\not=\infty$. Let $N$ and $P$ denote the set of
grades of generators of $\Y_1$ and $\U_1$ respectively.
}
For the next definition, recall that $X_1(G_i)$ is the chain space spanned by the last edges (by filtration order) that create
$1$-cycles in $G_i$. Consider the graded module
$\X_1: X_1(G_0)\rightarrowsp{}X_1(G_1)\rightarrowsp{}\cdots\rightarrowsp{} X_1(G_m)$.
\begin{definition}
A $1$-chain $y\in Y_1(G)$ is
called a \emph{representative} for a bar $[a,b)$ with $b\not=\infty$ in the barcode of $\Lsf_1$ if
(i) $y\in Y_1(G_a)$ is generated at grade $a$ and (ii) $y + x$ becomes a $1$-cycle with
the addition of a $1$-chain $x\in X_1(G_b)$ at grade $b$. If $b=\infty$, requirement (ii)
is dropped.
\end{definition}
The following result is a rephrasing of Proposition 9 in~\cite{DH21}.
\begin{proposition}
    Let $P$ and $N$ denote the set of grades of generators of $\Y_1$ and $\X_1$ respectively.
    Let $\mu: P\rightarrow (N\sqcup P)$ be an injection. Then, every interval
    $[a,\mu(a))$ with $\mu(a)\not=a$ is a finite bar and every interval $[a,\infty)$ with $a=\mu(a)$ is an infinite bar in the barcode of $\Lsf_1$ if and only
    if there is a representative for each of them.
    \label{prop:graphbarcode}
\end{proposition}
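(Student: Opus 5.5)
We derive the statement from the presentation $\Z_1\xrightarrow{\phi}\Y_1\to\Lsf_1\to 0$ of Eq.~\eqref{eq:linkpresent} and its reduction as in Proposition~\ref{prop:link-reduce}. The sets in the statement seem to have their names swapped relative to the text: a finite bar $[a,b)$ of $\Lsf_1$ has $a$ equal to the grade of a generator of $\Y_1$ (a negative edge) and $b$ equal to the grade of a generator of $\Z_1$, i.e.\ a positive edge, whose grades coincide with those of the generators of $\X_1$. We therefore read $\mu$ as a map from $\Y_1$-grades to $\X_1$-grades together with the $\Y_1$-grades, the fixed points marking infinite bars. Since $\Fcal$ is simplex-wise, every grade carries at most one generator of $\Y_1$ and one of $\X_1$. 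Hence the barcode of $\Lsf_1$ is itself such an injection $\mu_0$: each $\Y_1$-grade starts exactly one bar, by the minimality argument in the proof of \Cref{thm:link} together with Proposition~\ref{prop:grades}, and each $\X_1$-grade ends at most one bar. The claim then reduces to two directions. First, if $[a,\mu(a))$ is a bar of $\Lsf_1$, we produce a representative. Second, if every interval of $\mu$ has a representative, then $\mu=\mu_0$.

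For the first direction we use the matrices of Section~\ref{sec:alg}. Reducing $V'$ to $R'=V'W$ gives, for each non-zero column $j$ with $\low_{R'}(j)=a$ and $\gr(j)=b$, the element $z=V^\chi W e_j\in Z_1(G_b)$, a combination of $\Z$-columns of grades at most $b$. Its $\Y$-part $y=\phi_b(z)$ is column $j$ of $R'$, whose lowest $\Y$-row is $a$, so $y\in Y_1(G_a)$ and $y$ is generated at grade $a$. Its $\X$-part $x=z-y$ lies in $X_1(G_b)$, and $y+x=z$ is a cycle at grade $b$. For an unpaired row $a$ the elementary chain on $\sigma_a$ serves as the representative.

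The converse is the main obstacle. We would adapt the argument of Proposition~9 in~\cite{DH21}: given representatives $y_a$ with cycles $z_a=y_a+x_a$, build a matrix whose columns are the $z_a$ restricted to $\Y$-rows. The aim is to show this matrix is reduced, with pivots $\low=a$ at column grades $\mu(a)$, and presents $\Lsf_1$, i.e.\ that it is obtained from $V'$ by an invertible upper-triangular change of basis.

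The key steps are as follows.
\begin{enumerate}
\item The cycles $z_a$ for the finite intervals are linearly independent and generated at distinct grades $\mu(a)$, by injectivity of $\mu$ and because each contains the positive edge $\mu(a)$ as its last $\X$-term.
\item Completing them with cycles for the remaining positive edges not in the image of $\mu$ gives a basis of $\Z_1$ compatible with grades. This is where the graph structure enters: we need that $z_a$ is a valid generator at grade $\mu(a)$, i.e.\ that $\mu(a)$ is its largest simplex. This holds because $y_a\in Y_1(G_a)$ and $a<\mu(a)$, and we must check that $x_a$ has no positive edge later than $\mu(a)$.
\item The pivots of the corresponding columns of $\phi$ are exactly the distinct values $a$, so the matrix is reduced. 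Proposition~\ref{prop:reduce} then yields the barcode, and uniqueness of barcodes forces $\mu=\mu_0$.
\end{enumerate}
The delicate point is that the remaining basis columns must not create conflicting pivots. We would handle this by reducing them against the $z_a$, in the same left-to-right fashion as the persistence algorithm.
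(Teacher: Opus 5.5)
Your forward direction is fine: reading $z=V^\chi W e_j$ off $R'=V'W$ gives $y=\phi_b(z)$ with last simplex $\sigma_a$ and $x=z-y\in X_1(G_b)$ containing $\sigma_b$, and elementary chains handle the unpaired rows. One small correction: the fact that each $\Y_1$-grade starts exactly one bar comes from Proposition~\ref{prop:reduce}, not from minimality. The presentation $\Z_1\xrightarrow{\phi}\Y_1$ is generally not minimal, because $\ker\phi$ contains every cycle supported on positive edges.

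The converse, however, breaks exactly at the point you call delicate, and the stated hypotheses cannot repair it. Left-to-right reduction only adds columns of smaller grade, so a completing cycle $w_c$ with $c<\mu(a)$ can never be reduced by $z_a$. Instead, $w_c$ may take the pivot $a$ first; then $z_a$ is reduced against it and acquires a new pivot, possibly a fixed point of $\mu$. Nothing rules this out. A representative of $[a,\infty)$ is any chain generated at $a$, and the elementary chain $\sigma_a$ always qualifies. Representatives of finite intervals say nothing about positive edges outside $\mu(P)$.

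Concretely, work over $\mathbb{Z}_2$ with vertices $p,q,r,s$ followed by edges $e_1=pq$, $e_2=qr$, $e_3=rs$, $e_4=ps$, $e_5=pr$. The generating cycles $e_1+e_2+e_3+e_4$ and $e_1+e_2+e_5$ have $\phi$-images with lows $e_3$ and $e_2$. So the barcode of $\Lsf_1$ is $[e_3,e_4)$, $[e_2,e_5)$, $[e_1,\infty)$. Yet the injection $\mu(e_3)=e_5$, $\mu(e_2)=e_2$, $\mu(e_1)=e_1$ has a representative for every interval: $y=e_3$ with $x=e_4+e_5$ forms the triangle $prs$, and $e_2$, $e_1$ serve for the infinite intervals. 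In your scheme, the only cycle ending at $e_4$ has pivot $e_3$ and precedes $z_{e_3}=e_3+e_4+e_5$. That column then reduces to $e_1+e_2+e_5$, with pivot $e_2$. So the ``if'' direction, read with the paper's definition of representative, is false. Requiring each $[y_a]$ to be nonzero in $\Lsf_1$ along its interval would not help either: $[e_1]=[e_2]\neq 0$ at $e_5$, where $\dim\Lsf_1=1$.

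The paper gives no argument here beyond citing Proposition~9 of~\cite{DH21}. There, the pairing is a bijection between indices at which the dimension rises and indices at which it falls. Here, $N$ also contains positive edges at which $\dim\Lsf_1$ does not change, and an injection $\mu$ loses that count. What you need is an extra hypothesis: every $c\in N\setminus\mu(P)$ admits a cycle $w_c$ with last simplex $\sigma_c$ and $\phi(w_c)=0$, i.e., a cycle supported on positive edges only. Equivalently, $\mu(P)\cap N$ must be exactly the set of grades at which $\dim\Lsf_1$ drops. Indeed, if $\dim\Lsf_1$ does not drop at $c$, then $\phi(w_c)\in\img\phi_{c-1}$ for any cycle $w_c$ ending at $\sigma_c$, so $w_c$ can be corrected by a cycle of $Z_1(G_{c-1})$.

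With this hypothesis, $\{z_a\}\cup\{w_c\}$ is a graded basis of $\Z_1$: one cycle per positive edge, with that edge as its last simplex. Note that $a<\mu(a)$ holds automatically, since the last simplex of any cycle is positive. The presentation matrix in this basis is already reduced, with pivots $a$ at grades $\mu(a)$ and zero columns elsewhere. Proposition~\ref{prop:reduce} then returns exactly the intervals of $\mu$, with no further reduction. In the forward direction, the zero columns of $R'$ supply such $w_c$, so this corrected equivalence holds. It is also what {\sc GraphLink} actually provides, since an unpaired $e'$ closes a cycle containing no $\Y_1$-edge.
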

\begin{algorithm}[!t]
    \caption{{\sc{GraphLink}($\Fcal$)}}
    \KwData{A filtration $\Fcal: G_0\inctosp{}G_1\inctosp{}\cdots\inctosp{} G_m$ of a graph $G=G_m$ with $n$ vertices and edges weighted by its grade $\gr(\cdot)$ in $\Fcal$}
    \KwResult{The barcode of the $1$-link module $\Lsf_1$}
    $X_1:=\emptyset$; $Y_1:=\emptyset$\;
    Initialize a union-find structure with all vertices as singletons\;
    \tcp{Partition edges into $X_1$ and $Y_1$}
    \For{each edge $e=(u,v)$ in filtration order}{
        \eIf{$\textsc{Find}(u)=\textsc{Find}(v)$}{
            $X_1:=X_1\cup\{e\}$\;
        }{
            $Y_1:=Y_1\cup\{e\}$\;
            $\textsc{Union}(u,v)$\;
        }
    }
 
    Initialize a Link-Cut tree $T$ with all vertices and no edges\;
    \For{each edge $e\in Y_1$ in filtration order}{
        Insert $e$ into $T$\;
    }
    $B:=\emptyset$ \tcp*{set of finite bars}
    \For{each edge $e'=(u',v')\in X_1$ in filtration order}{
        $e:=$ edge of maximum grade on the path between $u'$ and $v'$ in $T$ that is also on $Y_1$. Set to empty if no such edge exists\;  
        \If{$e$ is not empty}{
            $B:=B\cup\{[\gr(e),\gr(e'))\}$\;
            Delete $e$ from $T$; insert $e'$ into $T$\;
        }
    }
    \For{each unpaired edge $e\in Y_1$}{
        $B:=B\cup\{[\gr(e),\infty)\}$\;
    }
    \Return $B$
\end{algorithm}
\para{Algorithm {\sc GraphLink}.}
The above proposition suggests an algorithm for computing the barcode of $\Lsf_1$. First, we compute
a set of generators for $\X_1$ and $\Y_1$ by running a minimum spanning tree algorithm
on $G$ with the edge weights equal to their filtration order. The algorithm maintains
a spanning forest. When an edge $e=(u,v)$ is added, it checks if the two vertices $u$ and $v$ are
already connected and, if so, it puts $e$ in a set $X_1$ and in a set $Y_1$ otherwise. These
sets constitute a basis for free modules $\X_1$ and $\Y_1$ respectively. 
By union-find data structure
this can be done in $O(n\alpha(n))$ time if $G$ has $n$ vertices and edges. 

Next, we compute bars of $\Lsf_1$ following Proposition~\ref{prop:graphbarcode}.
The algorithm computes the spanning
forest $T$ comprising the edges in $\Y_1$. Then, it brings in edges 
from $\X_1$ one by one in the filtration order and modifies the spanning forest $T$ while determining the pairs of edges
$(e,e')$ where $e\in \Y_1$ and $e'\in \X_1$ so that $[\gr(e),\gr(e'))$ constitutes
a finite bar in the barcode of $\Lsf_1$. At any generic step, let $e'\in \X_1$ be brought in whose
two endpoints are $u'$ and $v'$.
The edge $e'$ along with the simple path $h$ (not repeating vertices) that connects
$u'$ and $v'$ in $T$ creates a $1$-cycle. If $h$ has any edge from $\Y_1$, we generate a pair $(e,e')$ where $e$ is the edge with the largest grade (weight) which

is in both $\Y_1$ and $h$. In this case, we also
replace $e$ with $e'$ in $T$ keeping it as a spanning forest. 
\begin{proposition}
    $[\gr(e),\gr(e'))$ constitutes
a finite bar in the barcode of $\Lsf_1$.
\label{prop:graphbar}
\end{proposition}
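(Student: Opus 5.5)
The natural route is through Proposition~\ref{prop:graphbarcode}. The pairs produced by {\sc GraphLink} define a map $\mu$ on $P$: $\mu(\gr(e))=\gr(e')$ for each paired $e$, and $\mu(\gr(e))=\gr(e)$ for each unpaired $e$. This $\mu$ is injective. Each $e'\in X_1$ is paired at most once, when it is processed. Each $e\in Y_1$ is paired at most once, because after pairing it is deleted from $T$ and never reinserted. Grades of $X_1$ and $Y_1$ edges are distinct. So it remains to build a representative for every pair $(e,e')$, and then Proposition~\ref{prop:graphbarcode} gives that $[\gr(e),\gr(e'))$ is a finite bar.

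Fix the step where $e'=(u',v')$ is processed, and let $h$ be the $T$-path from $u'$ to $v'$. Let $z=h+e'$ (signed appropriately; over $\mathbb{Z}_2$ just the sum), which is a $1$-cycle. Write $z=x+y$ with $x$ supported on $X_1$ edges and $y$ on $Y_1$ edges. Every edge of $z$ has grade at most $\gr(e')$, since it is either $e'$ or an edge of $T$. Edges of $T$ are either $Y_1$ edges or $X_1$ edges inserted earlier, which precede $e'$ in filtration order.

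I claim $y$ is a representative.

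Condition (i): $e$ is by construction the maximum-grade $Y_1$ edge on $h$, so $y\in Y_1(G_{\gr(e)})$ and $y$ is generated exactly at grade $\gr(e)$.

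Condition (ii): $x$ contains $e'$, and all its other edges lie in $X_1$ with grade below $\gr(e')$. Hence $x\in X_1(G_{\gr(e')})$ is generated at grade $\gr(e')$, and $y+x=z$ is a cycle.

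The main obstacle is whether Proposition~\ref{prop:graphbarcode} needs only the local existence of a representative for each interval, or an additional compatibility: that the representatives collectively form a valid basis, i.e.\ that $\mu$ is the actual pairing and not just some injection with representatives. If compatibility is required, I would invoke the invariant that $T$ stays a spanning forest of the current graph $G_{\gr(e')}$, restricted to the relevant component. After the swap, the unpaired $Y_1$ edges together with the already paired $X_1$ edges span the same cycles.

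I would then show that the cycles $z$ constructed at successive steps have distinct pivots $(\gr(e),\gr(e'))$. This amounts to a reduced form of the matrix $V'$: the columns indexed by $e'$ have lows $\gr(e)$, and these lows are unique because each $e$ is removed from $T$ once paired. That matches Proposition~\ref{prop:link-reduce}(i) directly, and gives an alternative proof if the representative criterion is insufficient.

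Verifying that these cycles, projected to $Y_1$, are legitimate columns of $[\phi]$ obtainable by left-to-right reduction is the step needing the most care. Their last edge is $e'$, so they are $\Z$-generators at grade $\gr(e')$.
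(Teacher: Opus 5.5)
Your main route is the paper's: you define $\mu$ from the computed pairs, get injectivity because a paired $e\in Y_1$ is deleted from $T$ and never reinserted, take the representative from the $T$-path $h$, and invoke Proposition~\ref{prop:graphbarcode}. Splitting $z=h+e'$ as $x+y$ is more careful than the paper, which calls the sum of the edges of $h$ a chain in $Y_1$ even though $h$ can contain $X_1$ edges swapped in at earlier steps. Two steps are incomplete, though both are easy to repair.

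\emph{Infinite intervals.} Proposition~\ref{prop:graphbarcode} needs a representative for \emph{every} interval determined by $\mu$, including $[\gr(e),\infty)$ for each unpaired $e\in Y_1$, not just for the pairs. The paper supplies these. Since condition (ii) is dropped for infinite intervals, the elementary chain $e$ itself works.

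\emph{Grades on $h$.} Your reason that all edges of $z$ have grade at most $\gr(e')$ only covers the $X_1$ edges of $T$. All of $Y_1$ is inserted into $T$ before any $X_1$ edge is processed, including edges later than $e'$. What you need is an invariant: when $e'$ is processed, the edges of $T$ of grade $<\gr(e')$ have the same connected components as $G_{\gr(e')-1}$. It holds initially because $Y_1$ is the Kruskal forest in filtration order. Each swap preserves it, because the removed edge lies on a path of grade $<\gr(e'')$ that $e''$ closes up, so the components of $T_{<t}$ are unchanged for all $t>\gr(e'')$. Since $u',v'$ are already connected in $G_{\gr(e')-1}$ and $T$ is a forest, $h$ lies in grades below $\gr(e')$. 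This gives $\gr(e)<\gr(e')$ and $y\in Y_1(G_{\gr(e')})$, so $y+x$ really is a cycle at grade $\gr(e')$. The paper's proof also takes this for granted.

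Your worry about compatibility is unnecessary: as stated, Proposition~\ref{prop:graphbarcode} asks only for injectivity of $\mu$ plus one representative per interval. Your fallback is still a sound, self-contained proof once the invariant is in place, and it avoids the imported criterion. Each cycle $z_{e'}=h+e'$, one per $e'\in X_1$, has $e'$ as its last edge, so these cycles form a graded basis of $\Z_1$. Take the elementary $Y_1$ edges as generators of $\Y_1$. Then column $e'$ of $[\phi]$ is the $Y_1$-part of $h$. It is zero if $e'$ is unpaired and has low $\gr(e)$ otherwise. These lows are distinct, so the matrix is already reduced and Proposition~\ref{prop:reduce} applies with $V$ the identity. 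There is no need for these columns to be ``obtainable by left-to-right reduction''.
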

\begin{proof}
    First, we observe that if $(e,e')$ is a pair computed by the algorithm {\sc GraphLink}, then
    $\gr(e)$ and $\gr(e')$ are in the set of grades $P$ and $N$ respectively because
    $e$ creates a new chain in $Y_1$ and $e'$ creates a new chain in $X_1$. 
    Define the map $\mu: P\rightarrow (N\sqcup P)$ by $\gr(e)\mapsto \gr(e')$ if
    $(e,e')$ is a pair and by $\gr(e)\mapsto \gr(e)$ otherwise. Observe that
    every pair $(e,e')$ generated by the algorithm is disjoint in both components because any time
    an edge $e\in Y_1$ is output in a pair, it is deleted from the current spanning tree $T$ and
    every time only a new edge $e'\in X_1$ is brought in. This ensures that the map $\mu$ is an injection.
    
    Consider any pair $(e,e')$ generated by {\sc GraphLink} and let
    $h$ be the path connecting the two end points of $e'$. Let the
    $1$-chain $y\in Y_1$ be given by the sum of edges in $h$. It is
    a representative for the interval $[\gr(e),\gr(e'))$ because the chain
    $y$ is created with the addition of the last edge $e\in Y_1$ in the filtration and $y+e'$ becomes
    a cycle with the addition of the last edge $e'\in X_1$. For each edge $e$ in
    the rest of the edges
    in $Y_1$ that are not paired, the $1$-chain $y\in Y_1$ created by the addition of $e$ constitutes a representative of the interval $[\gr(e),\infty)$. The conclusion follows from Proposition~\ref{prop:graphbarcode}.
\end{proof}
To implement the algorithm {\sc GraphLink}, we use the Link-Cut tree~\cite{ST81} data structure.
It can support the following operations in $O(\log n)$ time to maintain a forest $T$ of size $O(n)$:
(i) insert/delete a vertex or edge in $T$, (ii) determine an edge with the maximum weight
on the unique path in $T$ connecting two vertices. Precisely, these are the two operations
needed to implement {\sc GraphLink}. If the filtration $\Fcal$ inserts at most $n$ vertices and edges, initially the minimum spanning forest $T$ with all edges in $Y_1$ is built in $O(n\log n)$ time. Then, maintaining $T$ over all $O(n)$ insertions and deletions of edges take at most
$O(n\log n)$ time.
\begin{theorem}
    Given a filtration $\Fcal$ of a graph with $n$ insertions, the links
    of $\Fcal$ can be computed in $O(n\log n)$ time.
\end{theorem}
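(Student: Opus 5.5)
The theorem follows by combining the correctness of {\sc GraphLink} with a running-time analysis. For a graph, the only nontrivial link module is $\Lsf_1$, since $\Lsf_0=0$ and $G$ has no simplices of dimension two or more. So the links of $\Fcal$ are exactly the barcode of $\Lsf_1$, and I will argue that {\sc GraphLink} outputs this barcode in $O(n\log n)$ time.

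\textbf{Correctness.} I will invoke Proposition~\ref{prop:graphbar} for every finite pair $(e,e')$ produced by the algorithm, and handle the unpaired edges of $Y_1$ as in its proof.
\begin{enumerate}
\item Build the injection $\mu\colon P\to N\sqcup P$ from the output: $\gr(e)\mapsto\gr(e')$ for paired $e$, and $\gr(e)\mapsto\gr(e)$ otherwise.
\item For each interval, give a representative: the path chain $h$ for a finite bar, and the chain created by $e$ for an infinite bar.
\item Apply Proposition~\ref{prop:graphbarcode}, which then certifies that the multiset returned equals the barcode of $\Lsf_1$.
\end{enumerate}
The partition of edges into $X_1$ and $Y_1$ is the correct positive/negative split of Definition~\ref{dfn:posneg}. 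An edge is positive exactly when its endpoints are already connected, which is precisely the union-find test.

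\textbf{Complexity.} This is a direct accounting of the data-structure operations.
\begin{itemize}
\item The first loop makes $O(n)$ \textsc{Find}/\textsc{Union} calls, costing $O(n\alpha(n))$.
\item Building the link-cut forest $T$ takes $|Y_1|\le n$ edge insertions. This is valid because the $Y_1$ edges form a spanning forest (no cycles), so $T$ remains a forest.
\item Each $e'\in X_1$ costs one path-maximum query, and at most one deletion and one insertion.
\end{itemize}
Each of these operations is $O(\log n)$ by~\cite{ST81}, so the total is $O(n\log n)$.

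\textbf{Main obstacle.} The delicate point is the query ``edge of maximum grade on the path that is also in $Y_1$''. Once $X_1$ edges have been swapped into $T$, the path may contain $X_1$ edges. The query must also stay well defined, which requires that $T$ remains a forest after each swap. For the forest property: $e$ lies on the $u'$--$v'$ path, so deleting $e$ separates $u'$ and $v'$, and inserting $e'$ reconnects them without creating a cycle. For the query, I would assign each inserted $X_1$ edge weight $-\infty$ in the link-cut tree. The standard path-max query then returns the maximum-grade $Y_1$ edge, or reports none if the maximum is $-\infty$. This keeps every step at $O(\log n)$ and completes the bound.
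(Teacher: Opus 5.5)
Your proposal is correct and follows the paper's own argument. Correctness comes from Proposition~\ref{prop:graphbar} (the injection $\mu$ plus representatives fed into Proposition~\ref{prop:graphbarcode}), and the bound comes from $O(n\alpha(n))$ union-find work plus $O(n)$ link-cut tree insertions, deletions, and path-maximum queries at $O(\log n)$ each. You also make explicit two implementation details the paper leaves implicit: the $-\infty$ weighting of swapped-in $X_1$ edges, which is needed because they could otherwise dominate the path maximum, and the check that $T$ stays a forest.
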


\section{Stabilizing link barcode}
The link barcode as we defined is not stable with respect to perturbations in the
filtration. For example, consider the two filtrations shown in Figure~\ref{fig:instability}. 
The reader can verify that we get a link bar $e_3-e_{10}$ in the left
filtration. The right filtration is obtained from the left one
with a single transposition of two consecutive edges.
This change turns the edge $e_{9}$ (which was $e_{8}$ before the transposition)
positive. As a result, the link bar $e_3-e_{10}$ vanishes because $e_{10}$
now appears in a cycle of only positive edges.

\begin{figure}[htbp]
\centerline{\includegraphics[width=0.5\textwidth]{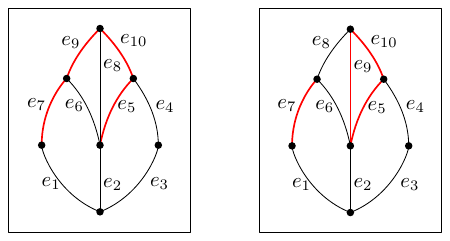}}
\caption{Instability of linked barcode.
Order of edges in the filtrations is indicated by subscripts
and positive edges are colored red. Left filtration 
has link bars $e_4-e_5, e_6-e_7, e_8-e_9, e_3-e_{10}$, and right filtration 
has link bars
$e_4-e_5,e_6-e_7,e_8-e_9$. With a transposition of the two edges $e_8$ and $e_9$ from the left to right, 
the bar $e_3-e_{10}$ of seven unit length vanishes.}
\label{fig:instability}
\end{figure}

The cause of this instability is rooted in the fact that
the split of positive and negative simplices $C_p(\cdot)=X_p(\cdot)\oplus Y_p(\cdot)$ changes with the filtrations. We fix it by considering this split in a ``reference'' filtration as in~\eqref{eq:filtration} which can be seen as growing
into the final complex $K:=K_m$. 
We consider a ``stabilized'' version of the link barcode
such that small perturbations on the reference filtration growing into $\cplx$
only cause small perturbations on the link barcode.
Let the ``reference'' simplex-wise filtration be
\begin{equation*}
\filtr: \cplxr_0\inctosp{\ssxr_1}\cplxr_1\inctosp{\ssxr_2}\cdots\inctosp{\ssxr_{m}} \cplxr_m=\cplx,
\end{equation*}
which determines the positivity and negativity for simplices in $\cplx$.

Let $\subcplx$ be a subcomplex of $\cplx$ and
$\pset^\filtr_p(\subcplx)$ (resp.\ $\nset^\filtr_p(\subcplx)$)
denote the set of $p$-simplices
of $\subcplx$ that are positive (resp.\ negative) in the filtration $\filtr$.
Define the subspace $X^\filtr_p(\subcplx)\subseteq C_p(\subcplx)$
to consist of all chains in $C_p(\subcplx)$ supported over 
$\pset^\filtr_p(\subcplx)$ and 
$Y^\filtr_p(\subcplx)\subseteq C_p(\subcplx)$
to consist of all chains in $C_p(\subcplx)$ supported over 
$\nset^\filtr_p(\subcplx)$.
We then have a split map similar to~\eqref{eqn:split},
\[\chi^\filtr_\subcplx:Z_p(\subcplx)\to X^\filtr_p(\subcplx)\oplus Y^\filtr_p(\subcplx)\] 
and in turn a projection map similar to~\eqref{eqn:proj},
\[\phi^\filtr_\subcplx:Z_p(\subcplx)\to Y^\filtr_p(\subcplx).\]

Now consider another subcomplex $\subcplx'$ of $\cplx$
such that $\subcplx\subseteq\subcplx'$. We have the following 
commutative diagram:
\begin{equation*}
\begin{tikzcd}[column sep=huge]
Z_p(\subcplx) \arrow[hookrightarrow,r,"\iG"] \arrow[d,"\phi^\filtr_{\subcplx}"] &[-5pt]  Z_p(\subcplx')  \arrow[d,"\phi^\filtr_{\subcplx'}"] \\
Y^\filtr_p(\subcplx) \arrow[hookrightarrow,r,"\iG"] & Y^\filtr_p(\subcplx') 
\end{tikzcd}
\end{equation*}
Therefore, the inclusion $\subcplx\subseteq\subcplx'$ induces
the following linear map:
\begin{equation}\label{eqn:alt-q-linmap}
\iG_*:Y^\filtr_p(\subcplx)\big/\,\img(\phi^\filtr_{\subcplx})\to Y^\filtr_p(\subcplx')\big/\,\img(\phi^\filtr_{\subcplx'}).
\end{equation}

\begin{definition}
Let $\subc_\cplx$ denote the category of subcomplexes of $\cplx$
with morphisms being inclusions. Define a functor 
\[\lfunc^\filtr_p:\subc_\cplx\to\vecc\]
where
$\lfunc^\filtr_p(\subcplx):=Y^\filtr_p(\subcplx)\big/\,\img(\phi^\filtr_{\subcplx})$
for each $\subcplx\in \subc_\cplx$
and $\lfunc^\filtr_p(\subcplx\subseteq\subcplx')$ is the induced linear map 
as in~\eqref{eqn:alt-q-linmap}
for $\subcplx\subseteq\subcplx'$ in $\subc_\cplx$.
\end{definition}

It is not hard to verify that $\lfunc^\filtr_p$ is a well-defined functor,
e.g., $\lfunc^\filtr_p(\subcplx\subseteq\subcplx'')=\lfunc^\filtr_p(\subcplx'\subseteq\subcplx'')\circ\lfunc^\filtr_p(\subcplx\subseteq\subcplx')$
for $\subcplx\subseteq\subcplx'\subseteq\subcplx''$
because the maps are induced by inclusions.
Notice that without a reference filtration for determining the positivity and negativity of simplices, 
the space $Y_p(\cdot)$ as in \Cref{dfn:posneg}
cannot be defined
and hence a functor like $\lfunc^\filtr_p$ for the stability result cannot be obtained.
\para{Stabilized link barcode.}
We now provide the alternative definition of link barcode 
which we shall later show to be stable.
When talking about stability, 
filtrations and persistence modules
are more naturally indexed over the category of real numbers $\Real$
instead of the category of integers $\mathbb{N}_0$.
For this, we consider
a \emph{monotone} function $f: \cplx \rightarrow \mathbb{R}$, where every pair of simplices $\sigma\subseteq \tau\in \cplx$ satisfies $f(\sigma)\leq f(\tau)$. 
We then have an $\Real$-indexed filtration $\filt_f:\Real\to\subc_\cplx$,
which is a functor such that $\filt_f(s)$ is the sublevelset
$\{\sigma\in\cplx\mid f(\sigma)\leq s\}$
and $\filt_f(s\leq t)$ is the inclusion.
Applying the functor $\lfunc^\filtr_p$ on $\filt_f$, we get an
$\Real$-module $\M^\filtr_{f,p}:=\lfunc^\filtr_p\circ\filt_f$,
which is a functor $\M^\filtr_{f,p}:\Real\to\vecc$ such that
\[\M^\filtr_{f,p}(s)=Y^\filtr_p\big(\filt_f(s)\big)\big/\,\img\big(\phi^\filtr_{\filt_f(s)}\big)\]
and $\M^\filtr_{f,p}(s\leq t)$ is the induced map.
It is well known~\cite{chazal2016structure} that 
a module like $\M^\filtr_{f,p}$ can be decomposed into submodules
supported over real-value intervals
similar to that in \Cref{eqn:intvmod}. The multi-set of real intervals
constitutes our alternative link barcode 
denoted $\lbarc^\filtr_p(f)$ which
could be retrieved as follows: 
Suppose that we have a simplex-wise filtration
\begin{equation}\label{eqn:tilde-filt-f}
\tilde{\filt}_f: \cplx_0\inctosp{\sigma_1} \cplx_1\inctosp{\sigma_2}\cdots\inctosp{\sigma_m} \cplx_m=\cplx
\end{equation}
where $i < j$ implies $f(\sigma_i)\leq f(\sigma_j)$.
Applying the functor $\lfunc^\filtr_p$,
we have an $\mathbb{N}_0$-module
\[\lfunc^\filtr_p\circ\tilde{\filt}_f: Y^\filtr_p(\cplx_0)\big/\,\img(\phi^\filtr_{\cplx_0})
\to Y^\filtr_p(\cplx_1)\big/\,\img(\phi^\filtr_{\cplx_1})\to\cdots\to
Y^\filtr_p(\cplx_m)\big/\,\img(\phi^\filtr_{\cplx_m})\]
which is similar to that in \Cref{dfn:lmod}.
For any finite integer interval
$[i,j)$ in the barcode of $\lfunc^\filtr_p\circ\tilde{\filt}_f$, 
one gets a real-value interval $[f(\sigma_i),f(\sigma_j))$
in $\lbarc^\filtr_p(f)$.
Moreover,
for any infinite interval
$[i,\infty)$ in the barcode of $\lfunc^\filtr_p\circ\tilde{\filt}_f$, 
one gets a real-value interval $[f(\sigma_i),\infty)$
in $\lbarc^\filtr_p(f)$.

\begin{theorem}
For any two monotone functions $f,g: \cplx \rightarrow \mathbb{R}$
and an integer $p>0$,
$d_B(\lbarc^\filtr_p(f),\lbarc^\filtr_p(g))\leq \|f-g\|_\infty$,
where $d_B$ is the bottleneck distance~\cite{CEH07} between barcodes
and $\|f-g\|_\infty=\max_{\sigma\in K} |f(\sigma)-g(\sigma)|$.
\end{theorem}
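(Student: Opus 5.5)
The plan is to follow the standard route used for the stability of zigzag or sublevel-set persistence through functoriality. The functor $\lfunc^\filtr_p:\subc_\cplx\to\vecc$ is defined once and for all from the reference filtration $\filtr$, and it is independent of $f$ and $g$. So $\M^\filtr_{f,p}=\lfunc^\filtr_p\circ\filt_f$ and $\M^\filtr_{g,p}=\lfunc^\filtr_p\circ\filt_g$ are images of two $\Real$-indexed filtrations of the same complex under one functor. Set $\varepsilon:=\|f-g\|_\infty$. I will show that the two modules are $\varepsilon$-interleaved and then invoke the algebraic stability (isometry) theorem for pointwise finite-dimensional $\Real$-modules~\cite{chazal2016structure}, which gives $d_B\le d_I\le\varepsilon$.

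First, the filtrations $\filt_f$ and $\filt_g$ are $\varepsilon$-interleaved in $\subc_\cplx$. If $f(\sigma)\le s$, then $g(\sigma)\le s+\varepsilon$, so $\filt_f(s)\subseteq\filt_g(s+\varepsilon)$; symmetrically, $\filt_g(s)\subseteq\filt_f(s+\varepsilon)$. These are morphisms in the poset category $\subc_\cplx$, so all interleaving triangles and squares commute automatically: any two composites of inclusions between the same subcomplexes are equal. Applying the functor $\lfunc^\filtr_p$ carries these inclusions to the induced maps $\iG_*$ of \eqref{eqn:alt-q-linmap}, giving families $\alpha_s:\M^\filtr_{f,p}(s)\to\M^\filtr_{g,p}(s+\varepsilon)$ and $\beta_s:\M^\filtr_{g,p}(s)\to\M^\filtr_{f,p}(s+\varepsilon)$. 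Functoriality turns the commutativity in $\subc_\cplx$ into the required identities, for example $\beta_{s+\varepsilon}\circ\alpha_s=\M^\filtr_{f,p}(s\le s+2\varepsilon)$, together with naturality with respect to the structure maps. Hence the modules are $\varepsilon$-interleaved.

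The one step requiring care is checking that $\lfunc^\filtr_p$ really is a functor on all of $\subc_\cplx$, and not merely along a single filtration. The paper asserts this, but I would verify it. For $\subcplx\subseteq\subcplx'$, the sets $\pset^\filtr_p$ and $\nset^\filtr_p$ are determined by $\filtr$ alone. Therefore $X^\filtr_p(\subcplx)\subseteq X^\filtr_p(\subcplx')$ and $Y^\filtr_p(\subcplx)\subseteq Y^\filtr_p(\subcplx')$, and the decomposition $z=x+y$ of a cycle $z\in Z_p(\subcplx)$ is the same whether $z$ is viewed in $\subcplx$ or in $\subcplx'$. It follows that $\phi^\filtr_{\subcplx'}\circ\iG=\iG\circ\phi^\filtr_{\subcplx}$, so $\iG$ maps $\img\phi^\filtr_\subcplx$ into $\img\phi^\filtr_{\subcplx'}$, and composites of the induced maps are the maps induced by composite inclusions.

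Here the use of the fixed reference filtration is essential. With the original definition the positive/negative split would depend on $f$, and the functor would not exist; this is exactly the source of the instability in Figure~\ref{fig:instability}.

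Finally, I will confirm that $\lbarc^\filtr_p(f)$, as retrieved through the simplex-wise refinement $\tilde{\filt}_f$, coincides with the interval decomposition of $\M^\filtr_{f,p}$. Between consecutive distinct critical values the module is constant, and at a critical value $s$ the map $\M^\filtr_{f,p}(s'\le s)$ for $s'<s$ close to $s$ is the composite of the simplex-wise steps of $\tilde{\filt}_f$ inserting the simplices with value $s$. Intervals of $\lfunc^\filtr_p\circ\tilde{\filt}_f$ with $f(\sigma_i)=f(\sigma_j)$ become empty and are discarded, which does not affect the bottleneck distance. The remaining intervals reindex to $[f(\sigma_i),f(\sigma_j))$. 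Applying the isometry theorem then yields $d_B(\lbarc^\filtr_p(f),\lbarc^\filtr_p(g))\le\|f-g\|_\infty$.
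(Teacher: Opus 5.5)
Your proposal is correct and takes the same route as the paper. The $\varepsilon$-interleaving of $\filt_f$ and $\filt_g$ in $\subc_\cplx$ is pushed through the fixed functor $\lfunc^\filtr_p$ to an $\varepsilon$-interleaving of $\M^\filtr_{f,p}$ and $\M^\filtr_{g,p}$, and the isometry theorem then bounds $d_B$. Your extra checks, that $\lfunc^\filtr_p$ is functorial on all of $\subc_\cplx$ and that $\lbarc^\filtr_p(f)$ read off from $\tilde{\filt}_f$ matches the interval decomposition of the $\Real$-module, supply details the paper only sketches.
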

\begin{proof}
The proof follows the standard proof for stability of classical
persistence barcodes. We only briefly describe the idea; see, e.g.,~\cite{CEH07,chazal2016structure}
for details. Let $\eG=\|f-g\|_\infty$. Then, we have an \emph{$\eG$-interleaving}
between $\filt_f$ and $\filt_g$,
which in turn induces an $\eG$-interleaving between $\M^\filtr_{f,p}$
and $\M^\filtr_{g,p}$ by applying the functor $\lfunc$.
This implies that the \emph{interleaving distance} between the two $\Real$-modules
is no more than $\eG$. By the Isometry Theorem~\cite{chazal2016structure},
$d_B(\lbarc^\filtr_p(f),\lbarc^\filtr_p(g))\leq \eG$.
\end{proof}

Let $\Pers_p(f)$ denote the classical $p$-th persistence barcode
of the monotone function $f$.
We also define the alternative {linked barcode} of $f$,
denoted $\lkbarc^\filtr(f)$, as follows:
\begin{equation}\label{eqn:lkbr}
\lkbarc^\filtr(f):=\bigcup_{p}\Pers_p(f)\,\sqcup\,\bigcup_{p>0}\lbarc^\filtr_p(f).
\end{equation}
We can view $\lkbarc^\filtr(f)$ by drawing intervals in each $\Pers_p(f)$
as ``bars'' over the real line and drawing pairs in each $\lbarc^\filtr_p(f)$
as ``links'' between endpoints of the bars.

Notice that link bars in $\lbarc^\filtr_p(f)$ still connect death ends of
bars in $\Pers_{p-1}(f)$ with birth ends of bars in $\Pers_{p}(f)$.
However, some of the starting and ending points of the link bars defined without the reference filtration
(i.e., as in \Cref{dfn:lmod}) may disappear in the link bars in $\lbarc^\filtr_p(f)$.
For example, consider the filtration $\tilde{\filt}_f$ as in \eqref{eqn:tilde-filt-f}.
Suppose that $\sigma_i$,
which is added to $\cplx_{i-1}$ to obtain $\cplx_i$, 
is a negative $p$-simplex in $\filtr$.
Then,
the dimension increases from the vector space $Y^\filtr_p(\cplx_{i-1})$
to $Y^\filtr_p(\cplx_{i})$.
If $\sigma_i$ is also negative in $\tilde{\filt}_f$,
then $Z_p(\cplx_{i-1})=Z_p(\cplx_{i})$, 
meaning that $\img(\phi^\filtr_{\cplx_{i-1}})=\img(\phi^\filtr_{\cplx_{i}})$.
So, the dimension increases from
$\lfunc^\filtr_p(\cplx_{i-1})$ to $\lfunc^\filtr_p(\cplx_{i})$,
and we have a new starting point of a $p$-link bar in $\lbarc^\filtr_p(f)$
that is also the ending point of a $(p-1)$-bar in $\Pers_{p-1}(f)$.
However, if $\sigma_i$ is positive in $\tilde{\filt}_f$,
the dimension increases from
$\img(\phi^\filtr_{\cplx_{i-1}})$ to $\img(\phi^\filtr_{\cplx_{i}})$,
meaning that 
the dimension stays the same from
$\lfunc^\filtr_p(\cplx_{i-1})$ to $\lfunc^\filtr_p(\cplx_{i})$.
In this case, no starting point of a link bar is produced.


We then have:
\begin{theorem}
For any two monotone functions $f,g: \cplx \rightarrow \mathbb{R}$,
$d_B(\lkbarc^\filtr(f),\lkbarc^\filtr(g))\leq \|f-g\|_\infty$.
\label{thm:stable}
\end{theorem}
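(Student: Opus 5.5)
The plan is to reduce the claim to the stability theorem just proved for each $\lbarc^\filtr_p$, together with classical bottleneck stability for each $\Pers_p$. The extra ingredient is that the bottleneck distance is subadditive over disjoint unions of barcodes. Concretely, $\lkbarc^\filtr(f)$ is by~\eqref{eqn:lkbr} the disjoint union of the multisets $\Pers_p(f)$ over all $p$ and $\lbarc^\filtr_p(f)$ over all $p>0$. I will view these as labeled pieces: each piece $\Pers_p$ has its counterpart $\Pers_p$ for $g$, and each piece $\lbarc^\filtr_p$ has its counterpart $\lbarc^\filtr_p$ for $g$. The bottleneck distance between the unions is then at most the maximum over pieces of the piecewise bottleneck distances.

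First, I fix $\eG=\|f-g\|_\infty$. By the classical stability theorem~\cite{CEH07,chazal2016structure}, for each $p$ there is an $\eG$-matching $\mu_p$ between $\Pers_p(f)$ and $\Pers_p(g)$. This is a partial bijection in which matched bars have endpoints within $\eG$, and each unmatched bar has length at most $2\eG$, so it is within $\eG$ of the diagonal. By the preceding theorem, for each $p>0$ there is likewise an $\eG$-matching $\nu_p$ between $\lbarc^\filtr_p(f)$ and $\lbarc^\filtr_p(g)$. I then define a matching between $\lkbarc^\filtr(f)$ and $\lkbarc^\filtr(g)$ as the disjoint union of all the $\mu_p$ and $\nu_p$.

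Because the union in~\eqref{eqn:lkbr} is disjoint (a multiset sum), each bar of $\lkbarc^\filtr(f)$ belongs to exactly one piece. Hence the combined map is a well-defined partial bijection. Every matched pair, and every unmatched bar, inherits the $\eG$ bound from its piece, so the combined map is an $\eG$-matching. Taking the infimum yields $d_B(\lkbarc^\filtr(f),\lkbarc^\filtr(g))\leq\eG$.

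The only subtle point is bookkeeping rather than analysis. The bottleneck distance of the union is defined on the union itself, so matchings across pieces (e.g., a bar in $\Pers_1$ matched to a link bar) are allowed. That only makes the infimum smaller, so restricting to piece-preserving matchings yields an upper bound, which is all we need.

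One must also make sure the matchings are attached to the correct multiset copies. For instance, a link bar in $\lbarc^\filtr_p$ may coincide as a real interval with a homology bar, and the union must count it separately. I would state this as a short lemma: $d_B(A\sqcup B, A'\sqcup B')\leq\max\{d_B(A,A'),d_B(B,B')\}$. I would apply it inductively over the finitely many nonempty pieces (there are only finitely many $p$ up to $\dim\cplx$). If one prefers a module-level argument, the same conclusion follows because the $\eG$-interleavings of $\Hm_p$ and of $\M^\filtr_{f,p}$ assemble into an $\eG$-interleaving of the direct sum module, to which the Isometry Theorem applies.
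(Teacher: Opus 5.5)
Your proposal is correct and follows the same route as the paper, which deduces the bound directly from the piecewise stability of each $\Pers_p$ and each $\lbarc^\filtr_p$. Your lemma $d_B(A\sqcup B,A'\sqcup B')\leq\max\{d_B(A,A'),d_B(B,B')\}$, proved by taking the union of the piecewise $\eG$-matchings, is exactly the step the paper leaves implicit.
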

\begin{proof}
This follows from
$d_B(\lbarc^\filtr_p(f),\lbarc^\filtr_p(g))\leq \|f-g\|_\infty$
and $d_B(\Pers_p(f),\Pers_p(g))\leq \|f-g\|_\infty$
for any suitable $p$.
\end{proof}

\begin{remark}
   One may wonder how our remedy for stability works for the example in Figure~\ref{fig:instability}. Suppose that we take the filtration in the
    left as a reference filtration. Then, the bar 
    $e_8-e_9$ of one unit length disappears in the
    right filtration for the transposition of $e_8$ and $e_9$ and the bar $e_3-e_{10}$ of seven unit length
    remains intact.
\end{remark}

\para{Computation.}
 The algorithm for computing the stabilized link barcode 
remains the same with the only change that the matrix $V'$ as in \Cref{sec:alg}
is constructed differently. First, the graded boundary matrix $D$ for the given filtration is reduced as in Algorithm {\sc LinkedBar} with the first for loop.
Next, we compute the matrix $V'$ by retaining only the columns of positive simplices
in the given filtration and the rows of negative simplices in the reference filtration. This is easily implemented by keeping the second for loop of 
{\sc LinkedBar} the same while only changing the third for loop. The fourth for loop remains the same.

Let $R'$ be the reduced matrix of $V'$ as computed by the algorithm {\sc LinkedBar}.
Proposition~\ref{prop:link-reduce} holds for the linked barcode
obtained by this choice which at the same time remains stable
following Theorem~\ref{thm:stable}.

\cancel{
\section{Stability}
Now we argue that $p$-link barcodes are useful because they provide stable
persistence diagrams.
A piecewise constant function $f: K \rightarrow \mathbb{R}$ is called \emph{monotone}
if for every pair of simplices $\sigma, \tau\in K$ where $\sigma$ is a face of $\tau$, one has
$f(\sigma)\leq f(\tau)$. For such a monotone function $f$,
let $\sigma_1,\ldots,\sigma_m$ be an ordered sequence where
$i < j$ implies $f(\sigma_i)\leq f(\sigma_j)$. Then, we have a simplex-wise filtration
$\Fcal_f: K_0\inctosp{\sigma_1} K_1\inctosp{\sigma_2}\cdots\inctosp{\sigma_m} K_m=K$.
Consider two such filtrations $\Fcal_f$ and $\Fcal_g$ induced by functions $f, g: K\rightarrow \mathbb{R}$. The distance between 
$f$ and $g$ in the $\ell_\infty$-norm is given by
$\|f-g\|_\infty=\max_{\sigma\in K} |f(\sigma)-g(\sigma)|$. Following the standard
definition of persistence diagram~\cite{CEH07}, for $p>0$, we define the $p$-link persistence diagram $\Dcal_p(\Fcal_f)$
of the $p$-link module $\Lsf_p$ for the filtration $\Fcal_f$. It consists of a multiset of points
in the extended plane $\bar{\mathbb{R}}^2$ where $\bar{\mathbb{R}}=\mathbb{R}\cup \{\pm\infty\}$
along with points on the diagonal with infinite multiplicity.
For a finite bar
$[i,j)$ in the barcode of $\Lsf_p$, one gets a point $(a,b)$ where $f(\sigma_i)=a$ and $f(\sigma_j)=b$.
For an infinite bar $[i,\infty)$, one gets a point $(a,\infty)$ where $f(i)=a$.
Then, the proof of~\cite[Interpolation Lemma]{CEH07} can be 
adapted almost verbatim to claim the following stability in terms of the bottleneck distance $d_B(-,-)$
between the $p$-link persistence diagrams.
\begin{proposition}
    $d_B(\Dcal_p(\Fcal_f),\Dcal_p(\Fcal_g))\leq \|f-g\|_\infty$ for any $p> 0$.
\end{proposition}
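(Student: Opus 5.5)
The plan is to adapt the Interpolation Lemma of~\cite{CEH07} essentially verbatim, as the statement suggests. Set $\eG=\|f-g\|_\infty$ and consider the straight-line interpolation $h_t=(1-t)f+tg$ for $t\in[0,1]$. Each $h_t$ is monotone, since a convex combination of monotone functions is monotone. So each $h_t$ induces a simplex-wise filtration $\Fcal_{h_t}$ and a $p$-link diagram $\Dcal_p(\Fcal_{h_t})$. By the triangle inequality for $d_B$, it suffices to show that $t\mapsto\Dcal_p(\Fcal_{h_t})$ is $\eG$-Lipschitz in $d_B$.

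\textbf{Step 1 (generic intervals).} The functions $t\mapsto h_t(\sigma)$ are affine, and $\cplx$ is finite. Hence there are finitely many times $0<t_1<\cdots<t_k<1$ at which two simplices exchange order. On each open interval between consecutive $t_i$, the order $\sigma_1,\ldots,\sigma_m$ is constant. Fix such an interval. The integer-indexed filtration $\Fcal_{h_t}$ is then the same combinatorial filtration, and so are the matrix $V$, the cropped matrix $V'$ and its reduction $R'$ (Algorithm~{\sc LinkedBar}). By Proposition~\ref{prop:bars}, the index pairs $[i,j)$ and $[i,\infty)$ of the link barcode do not depend on $t$. The diagram point of such a pair is $(h_t(\sigma_i),h_t(\sigma_j))$. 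Matching each point at time $s$ with the same index pair at time $t$ therefore gives a matching of cost at most
\[
\max_\sigma|h_s(\sigma)-h_t(\sigma)|=|s-t|\,\eG .
\]

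\textbf{Step 2 (crossing times).} At a crossing time $t_i$, the two orders on either side differ by transpositions of simplices $\sigma,\tau$ with $h_{t_i}(\sigma)=h_{t_i}(\tau)$. Chain the two one-sided estimates through the diagram at $t_i$ itself, computed from either order. One must then show that the diagram at $t_i$ does not depend on which tie-breaking order is chosen. If this holds, the $\eG$-Lipschitz property extends over all of $[0,1]$, and summing over the pieces gives $d_B(\Dcal_p(\Fcal_f),\Dcal_p(\Fcal_g))\le\eG$.

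\textbf{Main obstacle.} Step~2 is the hard step. For the classical diagram, tie-independence follows because the $\Real$-indexed homology module depends only on the sublevel sets. For link modules, however, the split $C_p=X_p\oplus Y_p$ depends on the simplex-wise order. Swapping two equal-valued simplices can change positivity, as with $e_8,e_9$ in Figure~\ref{fig:instability}. That changes $\Y_p$ and the cokernel globally, which can move endpoints such as $e_{10}$ whose values are not tied.

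I would first try to prove that a transposition of equal-valued $\sigma,\tau$ alters only those index pairs involving $\sigma$ or $\tau$. To do this, I would track the rank function of $\coker(\phi)$ through the swap, using the presentation~\eqref{eq:linkpresent} and Proposition~\ref{prop:grades}. If that held, every changed pair would differ by a zero-length shift in $h_{t_i}$-values, and tie-independence would follow.

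I expect this to go through only when the swap does not change the positive/negative status of the simplices involved. Figure~\ref{fig:instability} indicates that it genuinely fails when the status does change. In that case, the argument would have to freeze the positive/negative split along the interpolation, for example by fixing it from a reference filtration, which is exactly what the construction of $\lbarc^\filtr_p$ does. So the verbatim adaptation works on generic intervals, and the crossing-time analysis is where the claimed inequality has to be established or restricted.
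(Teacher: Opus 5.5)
Your Step~2 cannot be completed, because the proposition is false as stated. The hedge at the end of your proposal has to be resolved as ``restricted'', not ``established''. Your route is exactly the paper's own justification, which is only the assertion that the Interpolation Lemma of \cite{CEH07} adapts almost verbatim. That adaptation breaks at the crossing times you isolated. The paper itself withdraws the claim: the proposition sits in a block the authors suppressed, and the stability section opens with ``The link barcode as we defined is not stable'' (Figure~\ref{fig:instability}).

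Here is a minimal counterexample. Let $\cplx$ be the graph on vertices $a,b,c,d$, all with value $0$, and edges $bc,cd,ab,ac,ad,bd$ with $f$-values $1,2,3,3+\delta,5,6$. Let $g$ agree with $f$ except that the values of $ab$ and $ac$ are exchanged, so $\|f-g\|_\infty=\delta$.
\begin{itemize}
\item Under $f$ the negative edges are $bc,cd,ab$. Reducing $[\phi]$ gives the $1$-link bars $[3,3+\delta)$, $[2,5)$, $[1,6)$.
\item Under $g$ the edge $ac$ becomes negative and $ab$ positive. The positive edges $ab,ad,bd$ now form a triangle lying in $\ker\phi$, and the column of $bd$ reduces to zero. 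The bars are $[3,3+\delta)$, $[2,5)$, $[1,\infty)$.
\end{itemize}
A quick certificate: $\dim\Lsf_1(\cplx)=\#\{\text{negative edges}\}-\beta_1(\cplx)+\beta_1(P)$, where $P$ is the subgraph of positive edges. This is $3-3+0=0$ for $f$ and $3-3+1=1$ for $g$. Hence $d_B(\Dcal_1(\Fcal_f),\Dcal_1(\Fcal_g))=\infty$. Even after discarding the essential bars, the finite parts are at distance $3/2>\delta$ for small $\delta$.

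The rest of your diagnosis is accurate. Step~1 is fine. Your proposed lemma holds when the transposition changes no sign, since then the two integer-indexed link modules coincide at every index but one. A sign change at a tie, however, alters $\Y_p$ and $\ker\phi$ globally. So the link module is not a functor of sublevel sets, and no tie-breaking-independent diagram exists at a crossing time.

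The correct, restricted statement is the one the paper actually proves. Fix the positive/negative split from a reference filtration $\filtr$. Then:
\begin{itemize}
\item $\lfunc^\filtr_p$ is a functor on $\subc_\cplx$;
\item an $\eG$-interleaving of $\filt_f$ and $\filt_g$ is carried to one between $\M^\filtr_{f,p}$ and $\M^\filtr_{g,p}$;
\item the isometry theorem then gives $d_B(\lbarc^\filtr_p(f),\lbarc^\filtr_p(g))\le\|f-g\|_\infty$.
\end{itemize}
This argument needs no interpolation and no analysis of crossing times.
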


More interestingly, we can define a \emph{linked} persistence diagram where we consider the
linked barcode of $\Fcal$ to define the diagram. Consider a linked bar $x[p,q]$ as defined
in section~\ref{sec:linkedbar}. In analogy to the usual persistence diagram~\cite{CEH07},
we map each such linked bar $x[p,q]$ to a point in the extended space $\bar{\mathbb{R}}^{2k+2}$ 
where $k$ is the dimension of the complex $K$ filtered by $\Fcal_f$. 
To do this, we add $p$ pairs of repeated
coordinates $(x_1^p,x_1^p)$ as a prefix and another $k-q$ pairs of repeated coordinates $(x_2^q,x_2^q)$
as suffix, that is, the linked bar $x[p,q]$ is assigned
to a point $(x_1^p,\ldots,x_1^p,x_1^p,x_2^p,\ldots,x_1^q,x_2^q,x_2^q,\ldots,x_2^q)$
in $\bar{\mathbb{R}}^{2k+2}$. Now we define the \emph{linked persistence diagram}
$\Dcal(\Fcal_f)$ as the multiset of points in $\bar{\mathbb R}^{2k+2}$, each corresponding to
a linked bar in $\Fcal_f$. We also add the diagonal line in $\bar{\mathbb R}^{2k+2}$ with
infinite multiplicity into the diagram. One can now define the bottleneck distance
$d_B(-,\cdots,-)$ between two linked persistence diagrams $\Dcal(\Fcal_f)$ and $\Dcal(\Fcal_g)$
by extending the standard definition for two dimensions. Again, the proof of~\cite[Interpolation Lemma]{CEH07} can be adapted verbatim to claim the stability result that
$d_B(\Dcal(\Fcal_f),\Dcal(\Fcal_g))\leq \|f-g\|_\infty$.

We can also consider persistence diagrams for higher order linked bars. For $k>1$, $p>0$, we define
the order-$k$ $p$-link persistence diagram $\Dcal^k_p(\Fcal_f)$
as the multiset of points in $\bar{\mathbb R}^{2}$ each corresponding to a $p$-link in the barcode
for $\Lsf_p^k$ along with the points on the diagonal. Again we have that
$d_B(\Dcal^k_p(\Fcal_f),\Dcal^k_p(\Fcal_g))\leq \|f-g\|_\infty$. Denoting the bottleneck distance
between linked
persistence diagrams as $d_B^1(f,g):=d_B(\Dcal(\Fcal_f),\Dcal(\Fcal_g))$, we define
$d_B^k(f,g):=\max(d_B^1(f,g), \max_{p>0,1<j\leq k}\{d_B(\Dcal^j_p(\Fcal_f),\Dcal^j_p(\Fcal_g))\}$.
The inequality $d_B^k(f,g)\leq \|f-g\|_\infty$ follows from the stability of each individual
component used in defining $d_B^k(f,g)$.
}

\section{Higher order $p$-link bars}
We have seen that 
the $p$-link bars connect a right endpoint of a bar of the
homology module $\Hm_{p-1}$ with the left endpoint of a bar of the homology module $\Hm_p$. We can extend the idea of $p$-link bars to \emph{higher order} $p$-link bars that pair some of the bars of $\Hm_p$ by connecting their \emph{left end points}. These points are the grades of generators in $\Z_p$, where new $p$-cycles are created.
In Figure~\ref{fig:linkedbar}, a bar (not drawn) connecting $e$ and $f$ is in second order.

We define the higher order link bars recursively by defining submodules of $\Z_p$ and
considering presentations of higher order linked modules using them. First, consider the
set of $p$-simplices $\pset_p(K_i)$ as defined in Definition~\ref{dfn:posneg}. Let $K_i^0=K_i$. Then, recursively define
$K_i^k$ as the smallest subcomplex of $K_i$ containing the simplices in $\pset_p(K_i^{k-1})$; e.g. along with their vertices, edge $d$, edges $\{d,e\}$, and edges $\{d,e,f\}$  in Figure~\ref{fig:linkedbar} constitute
$K^1_8$, $K^1_9$, and $K^1_{10}$ respectively.
The $k$th order subfiltration $\Fcal^k \subseteq \Fcal$ preserving the grades is:
\begin{equation*}
\Fcal^k: K^k_0\inctosp{\sigma_{k_1}}K^k_1\inctosp{\sigma_{k_2}}\ldots\inctosp{\sigma_{k_m}} K^k_{m}
\mbox{ where } \sigma_{k_{i}}  \mbox{ is either empty or equals } \sigma_{i} \mbox{ in Eq.~\eqref{eq:filtration}}.
\end{equation*}
All graded modules defined before can now be defined with this $k$th order filtration.
Let $\X_p^k$, $\Z_p^k$, and $\Y_p^k$ denote the respective $k$th order graded modules.
Consider the sequence $\Z_p^k\rightarrowsp{\iota}\X_p^k\oplus \Y_p^k \rightarrowsp{\pi} \Y_p^k$ which
gives the presentation of the $k$th order $p$-link module $\Lsf_p^k$:
\begin{equation*}
\Z_p^k\rightarrowsp{\pi\circ\iota}\Y_p^k\rightarrowsp{}\coker(\pi\circ\iota)=\Lsf_p^k
\end{equation*}

\subsection{Algorithm for computing higher order bars}
\label{sec:higherorderalgo}
To compute links of every order, we embed the algorithm {\sc LinkedBar} in a recursive routine {\sc OrdBar}. Suppose that we have already computed the
presentation matrix $D:=[\pi\circ\iota]$ for the presentation of the module $\Lsf_p^k$.
Then, we reduce $D$ to obtain the matrices $R$ and $V$ where $R=DV$. We read the barcode
of $\Lsf_p^k$ from the reduced matrix $R$. Then we choose a submatrix
of $V$ which selects (i) the columns that coincide (in terms of grades) with the columns in $R$
that are zeroed out (these columns represent a basis for $\Z^{k+1}_p$) and (ii) rows
whose grades coincide with those of the columns in $R$ that are not zeroed out (these
rows represent a basis for $\Y^{k+1}_p$. The algorithm {\sc OrdBar} calls itself on the
chosen submatrix of $V$ that presents the link module $\Lsf_p^{k+1}$ of the next order.
In the algorithms below, we combine the computations for all dimensions $0\leq p\leq k$ and
assume that $D$ is a block diagonal matrix, each
block (possibly empty) 
representing the presentation matrix $D_p$ for the module $\Lsf_p^k$.

The first time {\sc OrdBar} is called from the main routine {\sc OrdLinkedBar},
it is passed the graded boundary matrix $D$ of the input filtration. The algorithm
{\sc OrdBar} computes the barcode of the homology module $\Hm_p$ and then prepares
the matrix $V$ to be passed on to the next recursive call that begins computing
the linked barcodes. Each recursive pass of the algorithm takes at most $O(n^3)$ time
if $n$ is the size of the input filtration $\Fcal$.

Since every time we issue a
recursive call, the matrix being passed decreases by at least one column and one row,
there could be $O(n)$ such calls giving a total time complexity of $O(n^4)$. 

\begin{algorithm}
    \caption{{\sc{OrdBar}($D$)}}
    \KwData{A graded matrix $D$ with $n$ columns} 
    \KwResult{Links of an appropriate order}
    \If{$D$ is not empty column or empty row  matrix}{
        Initialize $V$ as an $n\times n$ identity matrix\; 
        \For{$j:=1$ to $n$}{ 
            \While{$\exists i<j \mbox{ so that } \low_D(j)=\low_D(i)$}{
                 $\col_D(j):=\col_D(j)+\col_D(i)$;
                $\col_V(j):=\col_V(j)+\col_V(i)$\;
            }
        }
    }
    Output barcode from reduced $D$ according to Proposition~\ref{prop:bars}\;
    \If{$D$ is not empty column or empty row matrix}{
    \For{every column $j$ of $V$ with $\col_D(j)$ non-zero}{
    $V:= V\setminus \col_V(j)$\;
    }
    \For{every row $i$ of $V$ with $\col_D(i)$ is zero}{
    $V:= V\setminus \row_V(i)$\;
    }
    Call {\sc OrdBar}($V$)
    }
\end{algorithm}

\begin{algorithm}
    \caption{{\sc{OrdLinkedBar}($\Fcal$)}}
    \KwData{A graded boundary matrix $D$ with $n$ columns built from $\Fcal$} 
    \KwResult{Links of $\Fcal$ of all orders}
    Initialize $V$ as an $n\times n$ identity matrix\; 
    \For{$j:=1$ to $n$}{ 
        \While{$\exists i<j \mbox{ so that } \low_D(j)=\low_D(i)$}{
             $\col_D(j):=\col_D(j)+\col_D(i)$;
            $\col_V(j):=\col_V(j)+\col_V(i)$\;
        }
    }
    Call {\sc OrdBar}($V$)
\end{algorithm}
The algorithm {\sc GraphLink} for graph filtration can also be adapted to generate
higher order linked barcodes. For this, we run {\sc GraphLink} on the input
filtration of the graph $G$, determine the edge set $X$ that creates $1$-cycles,
consider the subgraph $G'\subseteq G$ induced by $X$ and call {\sc GraphLink} on
the subfiltration of $G'$. We repeat the process until $G'$ becomes empty.
Since every time $G'$ shrinks by at least one edge, we repeat {\sc GraphLink} at most
$O(n)$ times giving a time complexity of $O(n^2\log n)$.

\begin{theorem}
    Given a filtration of a simplicial complex of
    size $n$, links of all orders can be computed in $O(n^4)$ time. If the
    complex is a graph, the complexity improves to $O(n^2\log n)$.
\end{theorem}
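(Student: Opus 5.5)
The plan is to bound the cost of a single recursive level and then bound the number of levels, treating the general complex and the graph case separately.

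\textbf{Cost of one level.} For {\sc OrdLinkedBar}, the initial reduction of the $n\times n$ graded boundary matrix $D$ costs $O(n^3)$ by the standard left-to-right reduction argument. It processes $n$ columns. Each column receives at most $n$ additions, because every addition strictly lowers $\low_D(j)$ or zeroes the column, and each addition costs $O(n)$. Inside {\sc OrdBar}, the input is a graded matrix with at most $n$ rows and columns. Its block-diagonal structure over the dimensions $p$ does not increase its size. Hence the reduction and the extraction of the submatrix of $V$ (deleting the columns with non-zero $\col_D(j)$ and the rows indexed by zeroed columns) also cost $O(n^3)$. Reading off the barcode via Proposition~\ref{prop:bars} is linear in the matrix size.

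\textbf{Correctness of the recursion.} I would argue that the submatrix passed on presents $\Lsf_p^{k+1}$, exactly as in the justification of {\sc LinkedBar}. The columns of $V$ over zeroed columns of the reduced matrix span the kernel, which gives a basis of $\Z^{k+1}_p$ (the analogue of the Fact for $V$). The rows kept correspond to $\Y^{k+1}_p$. Deleting the other rows realizes the projection $\pi$, so the submatrix is the graded matrix of $\pi\circ\iota$.

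\textbf{Number of levels.} The key point is termination. A call recurses only if $D$ is neither an empty-column nor an empty-row matrix. In that case the next matrix has strictly fewer columns or rows.
\begin{itemize}
\item If some column of the reduced $D$ is non-zero, its column is deleted from $V$.
\item If every column is zero, every row of $V$ is deleted, and the next call gets an empty-row matrix and stops.
\end{itemize}
So the combined size decreases each time and there are $O(n)$ calls. The total is $O(n)\cdot O(n^3)=O(n^4)$.

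\textbf{Graph case.} Each call of {\sc GraphLink} costs $O(n\log n)$: union-find is $O(n\alpha(n))$, and each of the $O(n)$ Link-Cut tree operations is $O(\log n)$. We recurse on the subgraph induced by $X$, the positive edges. If $X$ is everything, there are no negative edges, no links, and no further cycles to split. Otherwise at least one edge is removed. I would argue that the first edge of each component in filtration order is always negative, so $G'$ strictly shrinks. That gives at most $O(n)$ iterations and $O(n^2\log n)$ in total.

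The main obstacle I expect is the strict-decrease argument in the graph case, together with the correctness of identifying the kept rows and columns with bases of $\Y^{k+1}_p$ and $\Z^{k+1}_p$. Both lean on the analogue of the Fact for presentation matrices rather than boundary matrices.
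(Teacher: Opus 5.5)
Your proposal is correct and follows the paper's argument: $O(n^3)$ per pass of \textsc{OrdBar} times $O(n)$ recursive passes, and $O(n\log n)$ per call of \textsc{GraphLink} times $O(n)$ rounds, because $G'$ loses at least one edge each round. Your termination argument and your first-edge-is-negative observation are more careful than the paper. The paper simply asserts that every call loses both a row and a column, which need not hold literally; your combined-size decrease is what is actually needed, together with the easy fact that the next matrix has $\mathrm{rank}(D)$ rows, so neither dimension ever grows.
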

\section{Experiments for graph isomorphism problem}
We choose the graph isomorphism problem to illustrate the discriminating power of the linked diagrams in contrast
to the standard persistence diagrams. 
The problem is neither known to be in P  nor known to be NP-complete.
Babai \cite{10.1145/2897518.2897542} proposed
a pseudo-polynomial time algorithm, and also a polynomial algorithm that works in almost all cases excluding regular graphs \cite{doi:10.1137/0209047}. One of the most studied polynomial heuristics is the Weisfeiler-Leman algorithm aka.\  1-WL \cite{leman1968reduction}.
It is known that the standard persistent homology  can distinguish graphs as much as the 1-WL algorithm \cite{DBLP:conf/iclr/HornBMMRB22} does. However, 
regular graphs pose most difficulty for 1-WL and persistent homology
based approaches.
We show that our linked diagrams discriminate regular
graphs that are not distinguished by ordinary persistence diagrams.
In fact, it turns out that if we use the linked barcode
for higher orders, we get a better discrimination.

For a graph $G$, define vertex function
$f_v(G):V(G)\to \mathbb{R}$ that assigns to each vertex $v'\in V(G)$ the shortest path distance from $v$ in terms of the number of edges. Considering the lower star filtration 
of $G$ with $f_v$, we obtain the $k$th order $p$-link barcode $\lbarc_{p,k}(f_v)$
which together with the standard persistence barcode provide the $k$th order
$p$-linked barcode $\lkbarc_{p,k}(f_v)=\Pers_p(f_v)\sqcup\lbarc_{p,k}(f_v)$.
Taking a vertex $w$ on another
graph $H$, we 
obtain the bottleneck distance $d_B^k(f_v,f_w)=\max_p\{d_B(\lkbarc_{p,k}(f_v(G)),\lkbarc_{p,k}(f_w(H)))\}$ which
we take as a distance between $G$ and $H$:
\begin{equation*}
\label{eq:graph-distance}
h^k(G,H) := \max\{\max_{v\in G} \min_{w\in H}d_B^k(f_v,f_w), \max_{w\in H} \min_{v\in G}d_B^k( f_v,f_w)\}
\end{equation*}

\begin{table}[thb!]
    \centering
    \begin{tabular}{|c|c|c|c|c|c|c|c|}
    \hline
        Graph family &Total number of pairs & $h^0$ & $h^1$ & $h^2$ & $h^3$ & $h^4$ & $h^5$ \\ \hline
        (9,4,3)& 120     & 109   & 118   & -     & -     & -     & -     \\ \hline
        (10,3,3)&171    & 168   & 171   & -     & -     & -     & -     \\ \hline
        (12,4,4)& 66     & 44    & 55    & 66    & -     & -     & -     \\ \hline
        (13,4,4)& 465    & 386   & 415   & 464   & -     & -     & -     \\ \hline
        (16,6,2,2)& 1    & 1     & -     & -     & -     & -     & -     \\ \hline
        (25,12,5,6)& 105 & 0     & 14    & 59    & 94    & 105   & -     \\ \hline
        (26,10,3,4)& 45  & 0     & 9     & 42    & 44    & 45    & -     \\ \hline
    \end{tabular}
    \caption{Graph families in first 4 rows denoted $(n,g,d), \#$ have $n$ vertices, girth size at least $g$, and are $d$ regular. Graph families in the last 3 rows denoted $(n,d,\lambda,\mu), \#$ are strongly $d$
    regular where two adjacent vertices share $\lambda$ neighbors and each two non-adjacent vertices share $\mu$ neighbors. Each $h^k$ column shows the number of discriminated pairs of graphs using the barcode up to the $k$th order. When this numbers remains the same it is denoted with a $-$ sign. } 
    \label{tb:conglomerate-tables}
\end{table}



\cancel{
\subsection{Results}
One of the most used and well known algorithms for graph isomorphism testing is the Weisfeiler-Leman graph isomorphism test (WL-1), or one of its variants, these have the advantage to run in  polynomial time. However, these algorithms don't work well when applied to regular graphs, specially strong regular graphs. Other non polynomial time algorithms like nauty \cite{MCKAY201494}, have been proposed and work really well on practice. In this subsection we test how powerful   Proposition \ref{distance:graphs} is to discriminate regular and strong regular graphs. Specifically we are interested in how much is gained by using the linked barcode instead of the usual persistence barcode in the definition of (\ref{eq:graph-distance}).

We do our test in the set of all regular graphs with 10 vertices and degree 3. For each pair of graphs we computed two distances, using the usual barcode and the linked barcode, with their corresponding bottleneck distances as defined in   equation (\ref{eq:graph-distance}), the results are shown in table.
\definecolor{bluey}{rgb}{0.25, 0.67, 96}
\definecolor{green}{rgb}{0.28, 0.61, 0.14}
\definecolor{red}{rgb}{0.81960784, 0.09803922, 0.01568627}
\begin{table}[!htb]
    
    \begin{minipage}{.5\linewidth}

      \centering
    \begin{tabular}{|l|l|l|l|l|l|l|l|l|l|}
    \hline
 
        \diagbox[height=1cm,innerwidth=1cm]{$d_B^0}{$d_B^1$} &1 &2&3& 4 &  5 & 6 & 7 & 8 &9  \\ \hline
        1 &\diagbox[height=0.64cm,innerwidth=.31cm]{}{}  & 1 & 1  & \cellcolor{green}1 &  1 &  1 &  1 & 1 &1\\ \hline
        2 & 1 & {\diagbox[height=0.64cm,innerwidth=.31cm]{}{}}&  \cellcolor{green}1 & 1 &  1& 1 & 1 &  \cellcolor{green}1& \cellcolor{green}1 \\ \hline
        3 & 1 &  \cellcolor{red}0 & {\diagbox[height=0.64cm,innerwidth=.31cm]{}{}}& 1 & 1 & 1&  1&\cellcolor{green}1 & \cellcolor{green}1\\ \hline
        4   &  \cellcolor{red}0 & 1 & 1& {\diagbox[height=0.64cm,innerwidth=.31cm]{}{}} & 1& 1 & 1 & 1 & 1  \\ \hline
        5 & 1 & 1 & 1 & 1 &{\diagbox[height=0.64cm,innerwidth=.31cm]{}{}}  &\cellcolor{green}1 & 1 & 1 & 1 \\ \hline
        6 & 1 & 1 & 1 & 1 &  \cellcolor{red}0&{\diagbox[height=0.64cm,innerwidth=.31cm]{}{}}& 1 & 1 & 1\\ \hline
        7 & 1 & 1 & 1 & 1  & 1 & 1 & {\diagbox[height=0.64cm,innerwidth=.31cm]{}{}} & 1 & 1 \\ \hline
        8 & 1& \cellcolor{red}0 &\cellcolor{red}0 &1 & 1 &1 & 1 & {\diagbox[height=0.64cm,innerwidth=.31cm]{}{}} & \cellcolor{bluey}0\\ \hline
        9 &1 & \cellcolor{red}0 & \cellcolor{red}0 & 1 & 1 & 1 & 1 & \cellcolor{red}0 & {\diagbox[height=0.64cm,innerwidth=.31cm]{}{}}\\ \hline

    \end{tabular}
    \end{minipage}%
\end{table}
}
\cancel{
\begin{table}
    \begin{tabular}{|c|c|c|c|c|}
    \hline
        Graph family &No graphs & $d_B^0$ & $d_B^1$&  $d_B^2$\\ \hline
        (9,4,3) & 16 &109 &116 &120 \\ \hline
        (10,3,3)& 19 &160& 171&171 \\ \hline
        (12,4,4)& 12 &44 &66 & 66\\ \hline
        (13,4,4)& 31 & 386 & 449 &463 \\ \hline
        (16,6,2,2)& 2 & 1&1 &1 \\ \hline
        (25,12,5,6)& 15 &0 & 14& 27** \\ \hline
    \end{tabular}

\end{table}
}
\Cref{tb:conglomerate-tables} shows the number of pairs of non-isomorphic graphs detected by distance up to $h^k$ in each row; graph families are obtained from \cite{Coolsaet2023}. 
Every family except (9,4,3) and (13,4,4) is completely distinguished for non-isomorphic pairs after a certain order is reached beyond which
the number of pairs of discriminated graphs remain 
the same (denoted as $-$). Notice that
$h^0$ is the distance obtained by ordinary persistence diagrams.

\subsection{Applications to temporal network link prediction}

The problem of temporal link prediction is a foundational challenge in dynamic network analysis, centering on the ability to anticipate future structural changes in a network based on its historical topology. We frame temporal link prediction as a binary classification task to determine whether a new link will form between two nodes. Given a temporal network of timestamped edges $\mathcal{S} = \{(u_i, v_i, t_i)\}$, we sort the interactions chronologically and split them at a cutoff time $t^\star$ so that approximately 90\% of the earliest data forms the training set ($E_{\text{train}}$) and the remaining 10\% forms the test set ($E_{\text{test}}$). For each dataset, we consider only the first occurrence of any undirected pair to ensure we predict new link formation rather than recurrence, and we remove self-loops.

To train and evaluate our models, we construct balanced, 50/50 target sets of positive and negative pairs. For every true positive edge $(u,v,t)$ in a split, we generate a corresponding negative pair by sampling a non-edge $(u', v', t)$ uniformly at random from the network's vertices. We reject any negative candidate that ever appears as an actual edge in the full stream up to time $t$, and we assign it the exact same timestamp $t$ as its paired positive. This ensures that the negative examples inherit an identical temporal distribution to the true edges, forcing the model to learn structural patterns rather than simple timing cues.

For each candidate pair $(u,v,t)$, we extract a localized, time-causal subgraph $G_{uv}^{(t)}$ that captures only the historical network structure knowable strictly \textit{before} the reference time $t$. This is achieved using  a $2$-hop ego-net. We then construct a lower-star filtration on these subgraphs by normalizing the node arrival times into a range of $[0, 1]$. We compute three distinct topological barcodes per pair: barcodes for standard $H_0$, $H_1$, and the link barcode for $\sf L_1$.

Each barcode is vectorized into a fixed-length feature block using persistence images \cite{persistenceimage}. To isolate the explicit predictive value of the link barcode, we train a random forest classifier \cite{randomforest} over 20 independent seeds and compare two setups. The \emph{base} setup relies solely on the $H_0$ and $H_1$ vectors, which we optionally augment with two classical, time-restricted link prediction heuristics \cite{libennowell2007}: Preferential Attachment (PA) \cite{PA} and the Adamic–Adar index (AA) \cite{AA}. The \emph{link} setup adds the link-persistence feature block ($\sf L_1$) to these baselines.

Crucially, the classifier's internal random seed is kept identical between paired runs; leaving the presence or absence of the $L_1$ features as the only moving part. Adding the link feature increases AUC in most experiments (Table~\ref{tab:link_prediction}), with one exception (HighSchool with heuristics), where strong classical heuristics already saturate performance. We report the mean Area Under the ROC Curve (AUC) and use a paired $t$-test \cite{student1908probable} to determine if the topological link features provide a statistically significant performance boost over traditional network heuristics see Table \ref{tab:link_prediction}.
\begin{table}[h!]
\centering
\caption{Mean AUC over 20 seeds with and without the $L_1$ link features. 
$\Delta$ is reported in percentage points (\%); p-values are from a paired 
$t$-test between the two setups.}
\label{tab:link_prediction}
\begin{tabular}{llcccc}
\toprule
\textbf{Dataset} & \textbf{Heuristics} & \textbf{Base AUC} & \textbf{Link AUC} & \textbf{$\Delta$ (\%)} & \textbf{p-value} \\
\midrule
email-Eu      & none  & 0.549 & 0.577 & $+2.8\%$   & $<$0.001 \\
email-Eu      & AA+PA & 0.683 & 0.690 & $+0.7\%$   & 0.21     \\
HighSchool    & none  & 0.565 & 0.580 & $+1.5\%$   & $<$0.001 \\
HighSchool    & AA+PA & 0.805 & 0.798 & $-0.7\%$   & $<$0.001 \\
UCI msg       & none  & 0.539 & 0.552 & $+1.3\%$   & 0.14     \\
UCI msg       & AA+PA & 0.531 & 0.546 & $+1.5\%$   & 0.001    \\
Enron core200 & none  & 0.555 & 0.562 & $+0.7\%$   & 0.002    \\
Enron core200 & AA+PA & 0.609 & 0.616 & $+0.7\%$   & 0.001    \\
fb-forum      & none  & 0.534 & 0.539 & $+0.5\%$   & 0.009    \\
fb-forum      & AA+PA & 0.536 & 0.542 & $+0.6\%$   & 0.001    \\
CollegeMsg    & none  & 0.553 & 0.558 & $+0.5\%$   & 0.018    \\
CollegeMsg    & AA+PA & 0.515 & 0.525 & $+1.0\%$   & 0.018    \\
\bottomrule
\end{tabular}
\end{table}
\newpage


\bibliography{ref}

\end{document}